\newtheorem{theoremC}{Corollary}
\newtheorem{theoremP}{Proposition}
\newtheorem{theoremT}{Theorem}
\newtheorem{theoremD}{Definition}
\newtheorem{corollary}[theoremC]{Corollary}
\newtheorem{proposition}[theoremP]{Proposition}
\newtheorem{theorem}[theoremT]{Theorem}
\newtheorem{Definition}[theoremD]{Definition}
\begin{document}
\title{Vertex-Guided Redundant Constraints Identification for Unit Commitment\vspace{5pt}}

\author{Xuan He,  Yuxin Pan, Yize  Chen,~\IEEEmembership{Member, IEEE}, and Danny H.K. Tsang,~\IEEEmembership{Life Fellow, IEEE}\thanks{X. He and D.H.K. Tsang are with Hong Kong University of Science and Technology (Guangzhou); Y. Pan is with Hong Kong University of Science and Technology; Y. Chen is with University of Alberta. Emails: xhe085@connect.hkust-gz.edu.cn, yuxin.pan@connect.ust.hk,  yize.chen@ualberta.ca, eetsang@ust.hk. }}

\markboth{
In Submission}%
{Shell \MakeLowercase{\textit{et al.}}: 
}


\maketitle
\begin{abstract}
Power systems Unit Commitment (UC) problem determines the generator commitment schedule and dispatch decisions to realize the reliable and economic operation of power networks. The growing penetration of stochastic renewables and demand behaviors makes it necessary to solve the UC problem timely. It is possible to derive lightweight, faster-to-solve UC models via constraint screening to eliminate redundant constraints. 
However, the screening process remains computationally cumbersome due to the need of solving numerous linear programming (LP) problems. To reduce the number of LPs to solve, we introduce a novel perspective on such classic LP-based screening. Our key insights lie in the principle that redundant constraints will be satisfied by all vertices of the screened feasible region. Using the UC decision variables' bounds tightened by solving much fewer LPs, we build an outer approximation for the UC feasible region as the screened region. A matrix operation is then designed and applied to the outer approximation's vertices to identify all redundant constraints on-the-fly. Adjustments for the outer approximation are further explored to improve screening efficiency by considering the load operating range and cutting planes derived from UC cost and discrete unit status prediction. Extensive simulations are performed on a set of testbeds up to 2,383 buses to substantiate the effectiveness of the proposed schemes. Compared to classic LP-based screening, our schemes can achieve up to 8.8x acceleration while finding the same redundant constraints.
\end{abstract}


\begin{IEEEkeywords}
Unit commitment, constraint screening, outer approximation, model reduction.
\end{IEEEkeywords}

\section{Introduction}
\IEEEPARstart{U}nit commitment (UC) problem is a crucial task for obtaining reliable and economical operational decisions for power systems~\cite{muralikrishnan2020comprehensive}. UC aims to minimize the operational cost by optimizing the generation schedule involving on/off commitment and dispatch levels. To accommodate grid modernization and renewables integration, UC must be solved quickly and frequently to inform operators and adapt to the dynamic conditions of power systems. Various techniques are developed for solving large-scale UC, such as meta-heuristics \cite{kazarlis2002genetic}, reinforcement learning \cite{de2021applying}, and mixed integer programming (MIP) \cite{chen2022security}. 

Among them, MIP is still the most widely used approach, due to its strong theoretical backing and proven success in real-world applications. With binary variables indicating on/off commitment and continuous variables denoting dispatch levels, generation and other security constraints are also incorporated. Such formulations can be NP-hard and computationally challenging due to their MIP nature~\cite{bendotti2019complexity, Morales2013Tight}. These challenges are further complicated by various network constraints, such as line flow limits, which can substantially slow down the feasibility check and the branch-and-bound procedure of MIP solvers \cite{xavier2019transmission,guo2020fast}. Empirical studies suggest that for a particular demand level, a significant portion of network constraints are redundant, which indeed does not affect the UC decisions \cite{zhai2010fast}. By eliminating these redundant constraints, it is promising to obtain a lightweight UC model that can be solved in a timely manner. This elimination procedure is usually referred as \emph{constraint screening}~\cite{zhai2010fast, roald2019implied, porras2021cost, he2022enabling, he2023fast, telgen1983identifying, zhai2014fast, ding2020fast,awadalla2023tight,weinhold2020fast}.
\subsection{Motivation}
One type of classic UC screening approaches \cite{zhai2010fast, roald2019implied, porras2021cost, he2022enabling, he2023fast, telgen1983identifying, zhai2014fast, ding2020fast} identifies redundant constraints by comparing the actual bound of the line flow and its predefined bound. This actual bound is given by a linear programming (LP) model as follows,
\begin{subequations} \label{screening1}
\begin{align}
\max_{\boldsymbol{y}}~or \;\min _{\boldsymbol{y}} & \quad \text{power flow}~f_j~\text{through line}~j\label{scre_obj}\\
\text { s.t. } &\quad \boldsymbol{y} \in \Tilde{\boldsymbol{R}}_{uc}
\end{align}
\end{subequations}
where $\boldsymbol{y}$ is the decision variable such as generation dispatch, and $\Tilde{\boldsymbol{R}}_{uc}$ defines a relaxed solution space that contains the original feasible region of the UC model. \cite{zhai2010fast} shows that for model \eqref{screening1} if $f^{*}_j \leq \overline{f}_j$ (or $f^{*}_j \geq \underline{f}_j$) always holds, then the line limit $f_j \leq  \overline{f}_j$ (or $f_j \geq  \underline{f}_j$) will be redundant and removed from the UC model.
\begin{figure*}[t] 
\vspace{-1.5em}
\hspace{1cm}
\includegraphics[scale=0.65]{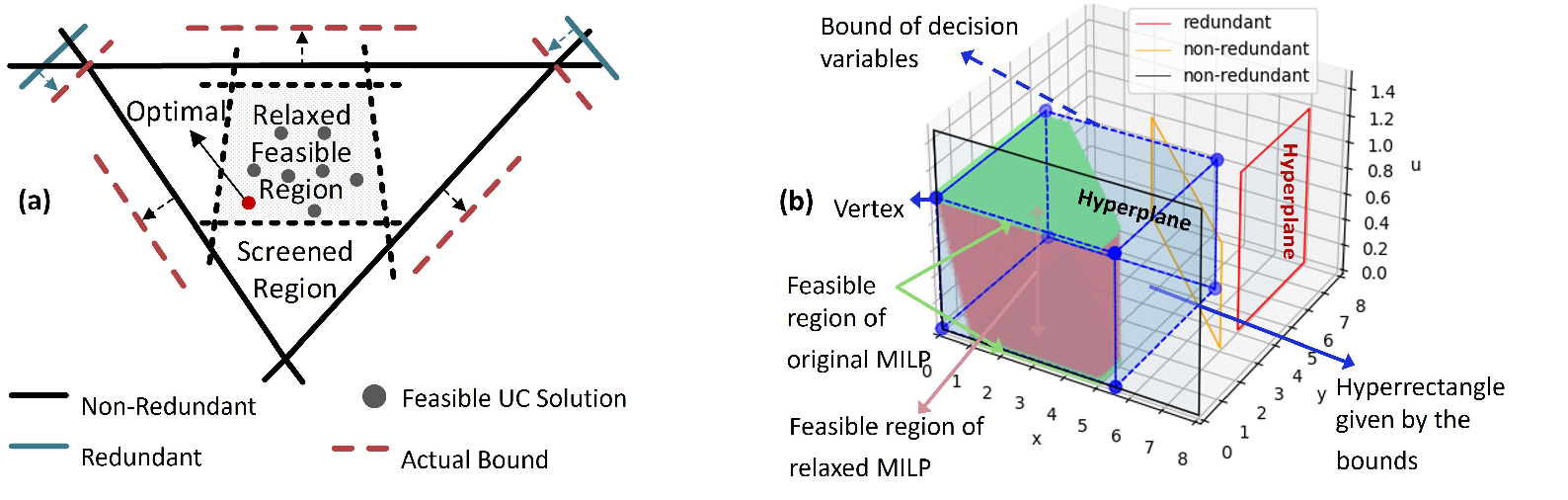}
    \vspace{-0.5em}
	\caption{\footnotesize \textbf{(a) Line flow-guided Screening.} By relaxing binary variables $u$ to $[0,1]$ in the original UC model, we can get a relaxed region containing the UC feasible region. The feasible region of the screening model \eqref{screening1}, i.e., the screened region will be formulated to contain the relaxed region. An actual bound corresponding to the checked limit can be determined by the objective value of \eqref{screening1}. The actual bound is proved to be tighter than the hyperplane representing the redundant constraint. \textbf{(b) Vertex-guided Screening.} The green region $A$ represents the original feasible region of a MILP problem. The pink region $B$ involving $A$ is the feasible region with the relaxed binary variable $u \in [0,1]$. We optimize the upper bounds and lower bounds of the decision variables to obtain the blue hyperrectangle $C$ containing $A$ and $B$. The black, orange, and red rectangles represent the semispace defining the corresponding inequalities. We prove this observation in Theorem 1: if all the vertices of the hyperrectangle satisfy a specific inequality, then this inequality will be redundant for $A$.} \label{Fig-intro-constraint}
\vspace{-1em}
\end{figure*}
However, such an approach only utilizes a one-to-one position relationship between the predefined bound and the actual bound of line flow as shown in Fig. \ref{Fig-intro-constraint}. Thus, \emph{it requires solving an LP model for every line limit in the power grid}, which binds the number of LPs and the screening time to the total number of limits. For large-scale systems, such line flow-guided screening (LFGS) becomes cumbersome.

This inspires efforts to shift the anchor of the screening procedure from the line limit itself to other perspectives, such as vertex \cite{awadalla2023tight} and interior point \cite{weinhold2020fast}. Though these efforts can bypass solving an optimization model for each line limit, they either introduce numerous binary variables \cite{awadalla2023tight} or rely on heuristics that do not take advantage of the strong theoretical foundation of LP-based screening~\cite{weinhold2020fast}. Thus, in this paper, we aim to address the following research question:

\emph{What anchor can enable fast UC screening without traversing all line flow limits?}

\subsection{Related Work}
The classic LP-based screening is popular for its simplicity and effectiveness, motivating continued improvements. \cite{roald2019implied, porras2021cost, he2022enabling, he2023fast} convert the load parameters to decision variables limited by a predefined operating range in the screening model \eqref{screening1}. The optimized actual bounds and the binding statuses of line limits thus apply across this range, avoiding repetitively solving \eqref{screening1} for each load input. Additional constraints based on domain knowledge such as the upper bound of the UC cost \cite{porras2021cost,he2022enabling} and the fixed unit on/off statuses \cite{he2023fast} are added as cutting planes to the screening model, which helps tighten the screened region implicitly.

Recent screening paradigms regarding geometric observations \cite{awadalla2023tight, weinhold2020fast}, operation transformation \cite{9314247,hua2013eliminating}, and data-driven methods \cite{zhang2019data, pineda2020data,ardakani2018prediction, xavier2021learning, Cordero2022Warm-starting,sterzinger2023learning} have also attracted attention. \cite{awadalla2023tight} observes that a vertex can capture multiple non-redundant constraints, while \cite{weinhold2020fast} shoots a ray from an interior point within the feasible region and marks the constraints first hit as non-redundant. To avoid solving optimization models, \cite{9314247} develops an efficient matrix operation, while \cite{hua2013eliminating} maps the variation of net demand to the binding situation of line limits. Yet both are heuristics without equivalence guarantee of original UC problems. Data-driven methods featuring the expressive representation are utilized in \cite{zhang2019data, pineda2020data,ardakani2018prediction, xavier2021learning, Cordero2022Warm-starting,sterzinger2023learning} to directly output redundant constraints through a trained model, which may be unreliable due to scale shifts or distribution shifts in UC instances. While LP-based screening provides an exact equivalence guarantee and strong generalization ability. 
\subsection{Contribution}
Herein, we provide a novel vertex-guided perspective on LP-based screening to address the proposed research question. Our proposed anchor is the tie between the redundant constraints and the vertices of the screened region. Specifically, as shown in Fig. \ref{Fig-intro-constraint}, the vertices represent the intersections of the bounds defining the screened region. A redundant constraint is satisfied by all vertices and does not cut this region~\cite{balinski1961algorithm, mattheiss1973algorithm}, which is firstly applied to model predictive control problem in \cite{herceg2013multi}. Note that finding the vertices of the relaxed UC feasible region, which is a polytope, is nontrivial. In our proposed method, an LP is solved first, replacing the objective function of line flow value in \eqref{screening1} with UC's decision variables. This LP helps get each UC variable's upper and lower bounds over the screened region. Then, using such obtained bounds, an outer approximation—a hyperrectangle—containing the UC feasible region is built, as illustrated by the blue region in Fig. \ref{Fig-intro-constraint}. The intersections of these bounds define the vertices and their coordinates of interest. 

Once the vertices are ready, checking each constraint for every vertex is still cumbersome. To address such concern, through a scaling operation, we derive that the left-hand value of the screened constraint substituting the coordinate of any vertex is smaller than that substituting the variables' bounds. If the latter satisfies the constraint, this constraint will be redundant. Then, we can verify the target constraint only using the variables' bounds, avoiding vertex-by-vertex checks. Considering this procedure for each target constraint, a matrix can be formed, and all constraints can be screened simultaneously via a simple matrix operation. The whole screening process requires solving the LPs only for UC variables, e.g., generation dispatch. Since the number of generation units is typically far fewer than lines in the UC models, we can reduce the number of LPs to solve significantly compared to LFGS. 

A subsequent hurdle is that the outer approximation and the vertex-to-bound scaling may yield a loose screened region, potentially leading to insufficient constraint elimination. To address this, we consider an ensemble strategy: first applying faster vertex-guided screening (VGS) to remove most of the redundant constraints, then refining the results using tighter LFGS for the remaining constraints. Additionally, the improvements in \cite{roald2019implied, porras2021cost, he2022enabling, he2023fast} developed for LP-based screening, can be seamlessly integrated into our VGS. 
The main contributions of our work can be summarized as follows:
\begin{itemize}
\item [1)]
  A vertex-guided perspective is proposed for LP-based screening, whose computation cost depends on the number of decision variables rather than constraints. This leads to faster LP-based screening for UC problems.
  
  \item [2)]
  An ensemble of VGS and LFGS is applied to ensure sufficient constraint elimination. Further, the load operating range is involved to provide applicability across varying load inputs, while the prediction of UC cost and on/off commitment are integrated to remove more constraints. 
  \item [3)]
   Neural network and K-Nearest-Neighbor (KNN) models are also investigated to improve our method and act as discrete variable predictors. Case studies on systems up to 2,383 buses validate the proposed methods' effectiveness with up to 8.8x acceleration compared to LFGS.
\end{itemize}
\section{Constraint Screening for UC Model} \label{Section_2}
This section details the formulation and assumptions of the investigated UC problem and illustrates the theoretical foundation for LP-based constraint screening.
\subsection{Unit Commitment Model}
In this paper, the system operators are supposed to decide both the ON/OFF status $\mathbf{u}$ and the dispatch level $\mathbf{x}$ for all generators to find the least operational costs corresponding to generators' cost vector $\mathbf{c}$. The power flows are modeled as a DC approximation, where $\mathbf{P}$ denotes the PTDF matrix \cite{hua2013eliminating}. $\overline{\mathbf{f}}$ and $\underline{\mathbf{f}}$ denote the upper and lower bounds of line flows. In practice, the UC problem may involve logic and ramp constraints, complicating the analysis of multi-period constraints. Since this work focuses on evaluating the acceleration benefit of our vertex-guided approach in LP-based screening, we follow \cite{pineda2020data, roald2019implied} and conduct our analysis on the single-period UC problem formulation:
\begin{subequations}
\label{UC}
\begin{align}
\min _{\mathbf{u}, \mathbf{x}}\quad & \mathbf{c}^{\top}\mathbf{x}\label{UC:obj_0}\\
\text { s.t. } \quad 
 &\underline{\mathbf{f}} \leq \mathbf{P}(\mathbf{B}\mathbf{x}-\boldsymbol{\ell}) \leq \overline{\mathbf{f}}, \label{UC:flow_0}\\
 &\sum_{g=1}^{|\mathcal{G}|}x_{g} - \sum_{n=1}^{|\mathcal{N}|}\ell_n =0, \quad \label{UC:balance_0}\\
 &u_g \underline{x}_g \leq x_{g} \leq u_g \bar{x}_g, \quad g=1,2,...,|\mathcal{G}|, \label{UC:gen_0}\\
 & u_g \in \{0, 1\}, \quad g=1,2,...,|\mathcal{G}|. \label{UC:u_0}
\end{align}
\end{subequations}
where $\mathcal{N}$ and $\mathcal{G}$ denote the index set of buses and generators respectively. $\mathbf{B} \in \mathbb{R}^{|\mathcal{N}| \times |\mathcal{G}| }$ maps the generation to each bus. \eqref{UC:flow_0}, \eqref{UC:balance_0} and\eqref{UC:gen_0} denote the line limit, the system power balance, and the generation bound, respectively. Note that in \eqref{UC:flow_0}, $\geq$ and $\leq$ are element-wise. \eqref{UC:u_0} enforces the binary constraint of the generator statuses, where $u_g=1$ indicates that the generator $g$ is on. $\ell_n$ denotes the load input at bus $n$. Here we denote the feasible region of \eqref{UC} as $R_{uc}$.


\subsection{Constraint Screening}
In real-world scenarios, the UC model can be large-scale MIP problems, which poses challenges in achieving a timely solution. Meanwhile, as shown in Fig. \ref{Fig-intro-constraint}, there exists a set of constraints in \eqref{UC}, which have no impact on UC solutions either adding/deleting them. This gives the potential to speed up the solution by eliminating such limits. 
\begin{Definition}
(Redundant Constraint) Let $\mathcal{J}$ be the index set corresponding to the constraints to be screened. A constraint $j$ is a redundant constraint if and only if the feasible region $R$ remains invariant when constraint $j$ is added or removed.
\end{Definition}
In the classic LP-based screening—LFGS, a model will be used to identify the redundant constraint, that is,
\begin{Definition}
(Screening Model) A screening model is formulated to maximize bi-direction line flow $f_{j}=\sum_{i=1}^{n}a_{i,j}(x_i-\boldsymbol{\ell}_i)$ over a feasible region that contains $R_{uc}$ with $\mathbf{u} \in [0,1]$ to get the actual bound $ f^{*,scr}_j$ for the target constraint $j$.
\end{Definition}
The actual bound $ f^{*,scr}_j$ in \cite{zhai2010fast} is obtained with the screening model \eqref{screening1} specialized to include \eqref{UC:flow_0}-\eqref{UC:gen_0}, exclude the target limit. The target limit is redundant to $R_{uc}$ when $ f^{*,scr}_j$ in their screening model does not exceed the predefined bound. We extend this criterion to general screening models and derive necessary corollaries for our analysis.


\begin{corollary} \label{c_1}
Let $M_{scr,relax}$ denote a screening model with the feasible region $R_{relax}$, and denote its constraint set as $\mathcal{J}_{relax}$. The upper bound (or lower bound) $j$ in \eqref{UC:flow_0} is redundant for $R_{relax}$  when the following relationship holds in $M_{scr,relax}$, 
\begin{align} \label{c1_check}
f^{*,scr}_j < \overline{f}_{j} \quad (or \quad  f^{*,scr}_j > \underline{f}_{j}).
\end{align}
\end{corollary}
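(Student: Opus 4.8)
The plan is to prove Corollary \ref{c_1} as a direct consequence of the definitions of redundant constraint and screening model, essentially by generalizing the argument of \cite{zhai2010fast} from the specific screening model \eqref{screening1} to an arbitrary screening model $M_{scr,relax}$ whose feasible region $R_{relax}$ contains $R_{uc}$ with $\mathbf{u}\in[0,1]$. First I would fix attention on a single line index $j$ and, without loss of generality, treat the upper-bound case $f_j\le\overline{f}_j$; the lower-bound case follows by symmetry (replacing $f_j$ with $-f_j$, or equivalently swapping $\max$ for $\min$). The key observation is that $f^{*,scr}_j=\max_{\boldsymbol{y}\in R_{relax}} f_j(\boldsymbol{y})$ is by construction the tightest possible value the linear functional $f_j$ can attain over $R_{relax}$, hence $f_j(\boldsymbol{y})\le f^{*,scr}_j$ for every $\boldsymbol{y}\in R_{relax}$.

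The main chain of reasoning then proceeds as follows. Suppose $f^{*,scr}_j<\overline{f}_j$. Let $R_{relax}^{+}$ denote the feasible region $R_{relax}$ with the constraint $f_j\le\overline{f}_j$ removed (or, if $R_{relax}$ is defined without that constraint to begin with, simply $R_{relax}^{+}=R_{relax}$). I want to show $R_{relax}^{+}=R_{relax}$ as sets, i.e. that adding or deleting $f_j\le\overline{f}_j$ leaves $R_{relax}$ invariant, which is exactly Definition 1 applied with $R=R_{relax}$. The inclusion $R_{relax}\subseteq R_{relax}^{+}$ is trivial since removing a constraint only enlarges the region. For the reverse inclusion, take any $\boldsymbol{y}\in R_{relax}^{+}$; since $R_{relax}^{+}$ differs from $R_{relax}$ only by the absent constraint $f_j\le\overline{f}_j$, it suffices to check that $\boldsymbol{y}$ nonetheless satisfies $f_j(\boldsymbol{y})\le\overline{f}_j$. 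But $\boldsymbol{y}\in R_{relax}^{+}$ together with the maximality of $f^{*,scr}_j$ over the larger-or-equal region gives $f_j(\boldsymbol{y})\le f^{*,scr}_j<\overline{f}_j$, so indeed $\boldsymbol{y}\in R_{relax}$. Hence the two regions coincide and constraint $j$ is redundant for $R_{relax}$ by Definition 1.

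The only subtlety — and the step I would be most careful about — is the bookkeeping around whether the screening model $M_{scr,relax}$ includes or excludes the target limit $j$ in its own constraint set $\mathcal{J}_{relax}$: the maximization defining $f^{*,scr}_j$ should be taken over $R_{relax}$ \emph{with the target limit $j$ removed}, exactly as in the specialization of \eqref{screening1} described in the paragraph preceding the corollary (``specialized to include \eqref{UC:flow_0}-\eqref{UC:gen_0}, exclude the target limit''). With that convention, $R_{relax}^{+}$ above \emph{is} the domain of the maximization, the inequality $f_j(\boldsymbol{y})\le f^{*,scr}_j$ holds for all $\boldsymbol{y}$ in it, and the argument closes cleanly. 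I would state this convention explicitly at the start of the proof to avoid any circularity. Everything else is immediate from the linearity of $f_j$ and the definition of the $\max$; no compactness or attainment argument is even needed for the \emph{sufficiency} direction, since we only use the upper bound $f_j\le f^{*,scr}_j$, which holds whether or not the supremum is attained.
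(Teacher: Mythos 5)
Your proof is correct, and for the case it treats it is essentially the paper's own argument: the paper's second paragraph proves exactly your chain ``every $\boldsymbol{y}$ in the screened region satisfies $f_j(\boldsymbol{y})\le f^{*,scr}_j<\overline{f}_j$, hence adding constraint $j$ does not shrink the region,'' i.e.\ $R_{relax}=R_{relax}^{+j}$ and redundancy follows from Definition~1. The one substantive difference is scope. The corollary is stated for a \emph{general} screening model $M_{scr,relax}$, and the paper's proof explicitly splits into two cases: (i) the screening LP's feasible region already \emph{contains} the target constraint $j$, and (ii) it \emph{excludes} it. You resolve the bookkeeping by declaring convention (ii) up front, which handles the configurations actually used later (LFGS in Algorithm~1 and the hyperrectangle $R_{bound}$ both exclude the target limit), and your direct set-inclusion argument for that case is cleaner than a contradiction. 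But you thereby drop case (i), which is not vacuous: if the maximization is taken over $R_{relax}$ \emph{with} $f_j\le\overline{f}_j$ still present, the strict inequality $f^{*,scr}_j<\overline{f}_j$ still implies redundancy, yet your ``maximality over the larger-or-equal region'' step no longer applies because the optimization domain is now the smaller region. The paper closes this with a facet/vertex argument; a tidier route is convexity: if some $\boldsymbol{y}_0$ satisfied all other constraints but $f_j(\boldsymbol{y}_0)>\overline{f}_j$, the segment from any point of $R_{relax}$ to $\boldsymbol{y}_0$ would contain a point of $R_{relax}$ with $f_j=\overline{f}_j$, contradicting $f^{*,scr}_j<\overline{f}_j$. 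Adding that two-line observation would make your proof cover the corollary in the same generality as the paper's.
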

\begin{proof} \label{c_1:proof}
For the case  $M_{scr,relax}$ is with constraint $j$, denote the region of $R_{relax}$ excluding constraint $j$ as $R_{relax}^{-j}$, and we have $R_{relax} \subseteq R_{relax}^{-j}$. Assume that constraint $j$ satisfying \eqref{c1_check} is non-redundant for $R_{relax}$, which means $R_{relax}^{-j} \neq R_{relax}$, i.e., $R_{relax}^{-j} \nsubseteq R_{relax}$. Then, the hyperplane $\mathcal{P}_j$ corresponding to constraint $j$ should be a facet of the polyhedral $R_{relax}$. Since $M_{scr,relax}$ is an LP, its optimal solution will be a vertex of $R_{relax}$. Note that $\mathcal{P}_j$ also represents the objective function of $M_{scr,relax}$, so this vertex will lie on $\mathcal{P}_j$, which indicates that the optimal value $f^{*,scr}_j = \overline{f}_{j}$ (or $\underline{f}_{j}$). Thus, \eqref{c1_check} will be violated, then we prove that constraint $j$ is redundant for $R_{relax}$ by this contradiction. 

For the case $M_{scr,relax}$ is without constraint $j$, we denote the region of $R_{relax}$ involving constraint $j$ as $R_{relax}^{+j}$, and we have $R_{relax}^{+j} \subseteq R_{relax}$. When \eqref{c1_check} holds, it means each feasible solution in $R_{relax}$ will satisfy constraint $j$, that is, $R_{relax} \subseteq R_{relax}^{+j}$. Thus, $R_{relax} = R_{relax}^{+j}$ holds and constraint $j$ is redundant for $R_{relax}$. 
\end{proof}
\begin{corollary} \label{C: transition}
Consider two regions $R_{scr,a}$ and $R_{scr,b}$, which satisfies $R_{scr,a} \subseteq R_{scr,b}$. When a constraint is identified as redundant for $R_{scr,b}$ via a screening model, it will be also redundant for $R_{scr, a}$.
\end{corollary}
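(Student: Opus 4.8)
The plan is to prove Corollary \ref{C: transition} by a direct set-inclusion argument, leveraging the characterization of redundancy already established in Corollary \ref{c_1} and in the Definition of a redundant constraint. First I would fix an arbitrary constraint $j$ that is identified as redundant for $R_{scr,b}$ via the screening model, and recall what this means concretely: by Corollary \ref{c_1}, the screening model over $R_{scr,b}$ yields an actual bound satisfying $f^{*,scr}_j < \overline{f}_j$ (or $f^{*,scr}_j > \underline{f}_j$), which is equivalent to saying that every point of $R_{scr,b}$ satisfies constraint $j$. Equivalently, in the language of the redundancy Definition, the feasible region $R_{scr,b}$ is unchanged whether or not constraint $j$ is imposed.

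The key step is then to transfer this property downward along the inclusion $R_{scr,a} \subseteq R_{scr,b}$. Since every point of $R_{scr,b}$ satisfies constraint $j$, and every point of $R_{scr,a}$ is a point of $R_{scr,b}$, it follows immediately that every point of $R_{scr,a}$ satisfies constraint $j$. Writing $R_{scr,a}^{+j}$ for the region obtained by adding constraint $j$ to $R_{scr,a}$ (and $R_{scr,a}^{-j}$ for the region with it removed, if it was present), this gives $R_{scr,a} \subseteq R_{scr,a}^{+j}$; the reverse inclusion $R_{scr,a}^{+j} \subseteq R_{scr,a}$ is trivial since adding a constraint can only shrink the region. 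Hence $R_{scr,a} = R_{scr,a}^{+j}$, so adding or removing constraint $j$ leaves $R_{scr,a}$ invariant, which is exactly the definition of redundancy for $R_{scr,a}$. A mirror of the second paragraph of the proof of Corollary \ref{c_1} handles the bookkeeping of whether the screening model nominally ``contains'' constraint $j$ or not, so the argument is uniform in that detail.

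I do not anticipate a substantive obstacle here: the statement is essentially monotonicity of redundancy under feasible-region inclusion, and the only care needed is to phrase ``identified as redundant via a screening model'' precisely enough (via Corollary \ref{c_1}'s inequality criterion) that the pointwise satisfaction of constraint $j$ is available as a hypothesis rather than something to be re-derived. If one wanted to avoid invoking Corollary \ref{c_1} and argue purely from the Definition, the mild subtlety would be that redundancy for $R_{scr,b}$ a priori only tells us the region is invariant under adding/removing $j$, not that all points satisfy $j$ — but these are equivalent (invariance under removal of an imposed constraint forces the constraint to have been non-cutting, hence satisfied everywhere), so either route closes quickly. I would present the proof in the two-case form matching Corollary \ref{c_1} for consistency with the surrounding exposition.
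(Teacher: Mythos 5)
Your proof is correct and follows essentially the same route as the paper's: redundancy transfers downward along the inclusion $R_{scr,a} \subseteq R_{scr,b}$ because every point of the larger region (hence of the smaller) satisfies the constraint. In fact you make explicit the pointwise-satisfaction step that the paper's one-sentence proof leaves implicit, so your version is, if anything, the more careful one.
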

\begin{proof}
Adding or removing constraint $j$ keeps $R_{scr,b}$ invariant, given that $R_{scr,a} \subseteq R_{scr,b}$, it follows that $R_{scr,a}$ also remains invariant. Thus, constraint $j$ is redundant for $R_{scr,a}$.
\end{proof}

LFGS can be cumbersome to optimize each line flow. Thus, we propose a vertex-guided way that only requires solving the LPs for each decision variable of the UC model.

\section{Vertex-Guided Constraint Identification} \label{Section_3}
This section discusses the relationship between the screened region's vertices and redundant constraints, and how to derive easily obtainable vertices through an outer approximation of the screened region. This relationship and the vertices are then used to guide the identification of redundant UC constraints with variables' bounds through a matrix operation.

\subsection{Relationship Between Vertices and Redundant Constraints}
We first formulate a screened region to contain $R_{uc}$. By relaxing $\mathbf{u}$ to $\Tilde{\mathbf{u}} \in [0,1]$, the UC model can be converted to an LP, and the compact form can be given as follows,
\begin{subequations}\label{UC: u_relax}
\begin{align} 
\min_{\mathbf{y}:=[\mathbf{x,\Tilde{u}}]} \quad & \Tilde{\mathbf{c}}^T\mathbf{y}\\
s.t.~~~ &\Tilde{\mathbf{A}}\mathbf{y} \leq \Tilde{\mathbf{b}} \label{A_y}
\end{align}    
\end{subequations}
where $\Tilde{\mathbf{A}}$, $\Tilde{\mathbf{b}}$ and $\Tilde{\mathbf{c}}$ are the coefficient matrix and vectors converted from \eqref{UC}. And denote feasible region of \eqref{UC: u_relax}  as $R_{\Tilde{\mathbf{u}}}$.
\begin{proposition}
The relaxed region $R_{\Tilde{\mathbf{u}}}$ will contain the original feasible region $R_{uc}$, that is, $R_{uc} \subseteq R_{\Tilde{\mathbf{u}}}$. According to Corollary \ref{C: transition}, a constraint redundant for $R_{\Tilde{\mathbf{u}}}$ is redundant for $R_{uc}$.
\end{proposition}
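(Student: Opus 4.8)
The plan is to establish the set inclusion $R_{uc} \subseteq R_{\Tilde{\mathbf{u}}}$ directly from the definitions, and then invoke Corollary~\ref{C: transition} with $R_{scr,a} = R_{uc}$ and $R_{scr,b} = R_{\Tilde{\mathbf{u}}}$ to transfer redundancy. The inclusion itself is the only real content, and it follows from the observation that \eqref{UC: u_relax} is obtained from \eqref{UC} purely by relaxing the integrality requirement \eqref{UC:u_0} on $\mathbf{u}$ to the box constraint $\Tilde{\mathbf{u}} \in [0,1]$, while all other constraints \eqref{UC:flow_0}--\eqref{UC:gen_0} are retained verbatim in $\Tilde{\mathbf{A}}\mathbf{y} \leq \Tilde{\mathbf{b}}$.

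Concretely, I would take any feasible point $(\mathbf{u}, \mathbf{x}) \in R_{uc}$ and show that the point $\mathbf{y} = [\mathbf{x}, \Tilde{\mathbf{u}}]$ with $\Tilde{\mathbf{u}} = \mathbf{u}$ satisfies \eqref{A_y}. Since $(\mathbf{u},\mathbf{x})$ satisfies \eqref{UC:u_0}, each $u_g \in \{0,1\} \subseteq [0,1]$, so the relaxed bound $\Tilde{u}_g \in [0,1]$ holds. The remaining rows of $\Tilde{\mathbf{A}}\mathbf{y} \leq \Tilde{\mathbf{b}}$ are exactly the encodings of \eqref{UC:flow_0} (written as two one-sided inequalities), \eqref{UC:balance_0} (written as two inequalities for the equality), and \eqref{UC:gen_0}; all of these are satisfied by $(\mathbf{u},\mathbf{x})$ by assumption and depend on $\mathbf{u}$ only through the same expressions, so they remain satisfied by $\mathbf{y}$. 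Hence $\mathbf{y} \in R_{\Tilde{\mathbf{u}}}$, which proves $R_{uc} \subseteq R_{\Tilde{\mathbf{u}}}$ (up to the identification of $(\mathbf{u},\mathbf{x})$ with $[\mathbf{x},\mathbf{u}]$, i.e.\ the trivial reordering of coordinates).

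With the inclusion in hand, the second sentence of the proposition is immediate: Corollary~\ref{C: transition} states that a constraint identified as redundant for the larger region $R_{\Tilde{\mathbf{u}}}$ via a screening model is also redundant for the contained region $R_{uc}$, which is precisely the claim. I would simply cite the corollary and conclude.

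I do not anticipate a genuine obstacle here — the statement is essentially a bookkeeping lemma. The only thing requiring mild care is making the correspondence between the two coordinate orderings explicit (the UC model lists $(\mathbf{u},\mathbf{x})$ while \eqref{UC: u_relax} stacks $\mathbf{y} = [\mathbf{x},\Tilde{\mathbf{u}}]$), and being clear that ``relax integrality to a box'' only enlarges the feasible set, never shrinks it, because $\{0,1\} \subseteq [0,1]$ and no other constraint is altered. Once that is noted, the proof is a one-liner invocation of Corollary~\ref{C: transition}.
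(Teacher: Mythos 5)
Your proof is correct and matches the intended argument: the paper states this proposition without a written proof precisely because the inclusion follows immediately from $\{0,1\}\subseteq[0,1]$ with all other constraints unchanged, and the redundancy transfer is a direct application of Corollary~\ref{C: transition}. Your explicit treatment of the coordinate identification and the point-by-point feasibility check is a faithful filling-in of the omitted details.
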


In the case study shown in Fig. \ref{Fig-intro-constraint}(b), the red hyperplane associated with a redundant constraint does not fit within any vertex of the screened region, indicating that it will be satisfied by all vertices. Theorem 1 formally elaborates on this scenario.
\begin{theorem} \label{vetice_theorem}
Denote a vertex of the screened region $\hat{R}$ defined by matrix $\hat{\mathbf{A}}$ and vector $\hat{\mathbf{b}}$ as $\boldsymbol{v}_q$, and the vertex set as $\mathcal{Q}$. A constraint $\hat{\mathbf{A}}_j\mathbf{y} \leq \hat{b}_j $ is redundant for $\hat{R}$ when satisfying the following relationship,
\begin{align} \label{vertices_check}
\hat{\mathbf{A}}_j\boldsymbol{v}_q < \hat{b}_j, \; \forall q \in \mathcal{Q}.
\end{align}
\end{theorem}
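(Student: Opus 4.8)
The plan is to show that $\hat R$ is unchanged when the constraint $\hat{\mathbf A}_j\mathbf y\le\hat b_j$ is removed, i.e. that every point of $\hat R^{-j}:=\{\mathbf y:\hat{\mathbf A}_i\mathbf y\le\hat b_i,\ i\ne j\}$ already satisfies $\hat{\mathbf A}_j\mathbf y\le\hat b_j$. The natural route is to reduce to vertices via convexity: if $\hat R^{-j}$ is a polytope, it is the convex hull of its finitely many vertices, and a linear inequality that holds at all of them holds everywhere on the hull. So the argument splits into (i) locating the vertices of $\hat R^{-j}$ among the vertices of $\hat R$, and (ii) invoking linearity of $\mathbf y\mapsto\hat{\mathbf A}_j\mathbf y$.

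First I would argue that dropping a constraint cannot create new vertices: any vertex $\boldsymbol w$ of $\hat R^{-j}$ is the unique solution of $n$ linearly independent active constraints drawn from $\{\hat{\mathbf A}_i\mathbf y\le\hat b_i:i\ne j\}$; since $\boldsymbol w\in\hat R^{-j}$ and, by hypothesis \eqref{vertices_check} together with the fact that the $\boldsymbol v_q$ span $\hat R$ — more carefully, since \eqref{vertices_check} will be used to show $\boldsymbol w$ satisfies constraint $j$ — we get $\boldsymbol w\in\hat R$, and there $\boldsymbol w$ is still defined by $n$ independent active constraints, hence a vertex of $\hat R$; thus $\boldsymbol w=\boldsymbol v_q$ for some $q$. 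Here one must be slightly careful about the order of logic: the cleanest phrasing is the contrapositive. Suppose constraint $j$ is \emph{not} redundant; then $\hat R^{-j}\supsetneq\hat R$, so the hyperplane $\hat{\mathbf A}_j\mathbf y=\hat b_j$ actually supports a facet of $\hat R$ (it is binding somewhere on the boundary of $\hat R$), and therefore $\hat R$ has a vertex $\boldsymbol v_q$ lying on that hyperplane, giving $\hat{\mathbf A}_j\boldsymbol v_q=\hat b_j$ and contradicting the strict inequality \eqref{vertices_check}. This mirrors exactly the vertex argument already used in the proof of Corollary~\ref{c_1}, so I would lean on that style.

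The key steps, in order: (1) assume for contradiction that $\hat{\mathbf A}_j\mathbf y\le\hat b_j$ is non-redundant for $\hat R$; (2) conclude $\hat R\subsetneq\hat R^{-j}$ and that the hyperplane $\mathcal P_j=\{\hat{\mathbf A}_j\mathbf y=\hat b_j\}$ is a supporting hyperplane of $\hat R$ defining a proper face (facet) of $\hat R$; (3) since $\hat R$ is a polytope, that face contains at least one vertex of $\hat R$, say $\boldsymbol v_{q_0}\in\mathcal Q$ with $\boldsymbol v_{q_0}\in\mathcal P_j$; (4) then $\hat{\mathbf A}_j\boldsymbol v_{q_0}=\hat b_j$, which contradicts \eqref{vertices_check}; (5) conclude constraint $j$ is redundant. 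An alternative, equally short forward argument: take any $\mathbf y\in\hat R$; write $\mathbf y=\sum_q\lambda_q\boldsymbol v_q$ with $\lambda_q\ge0$, $\sum_q\lambda_q=1$ (possible since $\hat R$ is a polytope, hence bounded — worth noting the boundedness, which holds here because $\hat R$ will be a hyperrectangle); then $\hat{\mathbf A}_j\mathbf y=\sum_q\lambda_q\hat{\mathbf A}_j\boldsymbol v_q<\sum_q\lambda_q\hat b_j=\hat b_j$ using \eqref{vertices_check}, so $\mathcal P_j$ does not cut $\hat R$ and constraint $j$ is redundant by Definition~1.

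The main obstacle is a conceptual rather than computational one: making precise why a non-redundant inequality of a polytope must be binding \emph{at a vertex} (step 3), which requires knowing $\hat R$ is pointed/bounded so that the face $\hat R\cap\mathcal P_j$ is itself a nonempty polytope and hence has an extreme point. In the intended application $\hat R$ is the hyperrectangle (outer approximation), so boundedness is immediate and I would simply state it; but if the theorem is meant for a general screened polytope $\hat R$ one should either assume $\hat R$ is bounded or add the standard argument that any inequality active on a nonempty face of a polyhedron with at least one vertex is active at some vertex. I would keep the write-up to the short convex-combination version, flag the boundedness of $\hat R$ explicitly, and cite the same polyhedral facts (\cite{balinski1961algorithm, mattheiss1973algorithm}) already referenced when this vertex principle was introduced.
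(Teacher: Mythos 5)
Your primary argument is correct and is essentially the paper's own proof: assume non-redundancy for contradiction, observe that the constraint must then be binding somewhere on $\hat{R}$, and conclude that some vertex $\boldsymbol{v}_q$ attains $\hat{\mathbf{A}}_j\boldsymbol{v}_q \geq \hat{b}_j$, contradicting the strict inequality \eqref{vertices_check}. Your explicit flagging of the boundedness of $\hat{R}$ (needed so the binding face actually contains a vertex, and automatic for the hyperrectangle used later), and the alternative convex-combination version, are careful refinements of the same vertex argument rather than a different route.
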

\begin{proof}
We prove this by contradiction. For a constraint $\hat{\mathbf{A}}_j\mathbf{y} \leq \hat{b}_j $ satisfying \eqref{vertices_check}, we assume it can be non-redundant. According to Colloary \ref{c_1}, there is a solution $\mathbf{y}^*$ belonging to $\hat{R}$ which enables $\hat{\mathbf{A}}_j\mathbf{y}^* = b^*_j \geq \hat{b}_j$. The hyperplane $\hat{\mathbf{A}}_j\mathbf{y} = b^*_j$ through $\mathbf{y}^*$ will divide $\hat{R}$ into two parts, and the vertex $v_q$ lies in the half-space where $\hat{\mathbf{A}}_jv_q \geq b^*_j$ will satisfy $\hat{\mathbf{A}}_jv_q \geq \hat{b}_j$. This is contradicting the assumption $\hat{\mathbf{A}}_j\mathbf{y} \leq \hat{b}_j $ satisfying \eqref{vertices_check}. Clearly, $\hat{\mathbf{A}}_j\mathbf{y} \leq \hat{b}_j $ is redundant.
\end{proof} 

Ideally, once we get all vertices of $R_{\Tilde{\mathbf{u}}}$, we can use Theorem \ref{vetice_theorem} to identify redundant constraints in the UC model. However, finding all vertices of the polytope $R_{\Tilde{\mathbf{u}}}$ is challenging \cite{khachiyan2009generating}. Moreover, it requires identifying non-redundant constraints first, which contradicts the process of Theorem \ref{vetice_theorem} that needs to obtain the vertices first. Thus, we consider finding an outer approximation for $R_{\Tilde{\mathbf{u}}}$ whose vertex can be found without knowing the binding situations of the constraints.


\subsection{Outer Approximation based on Actual Variables' Bounds}
\begin{figure}[tb] 
\vspace{-1.5em}
	\centering
	\includegraphics[scale=0.8]{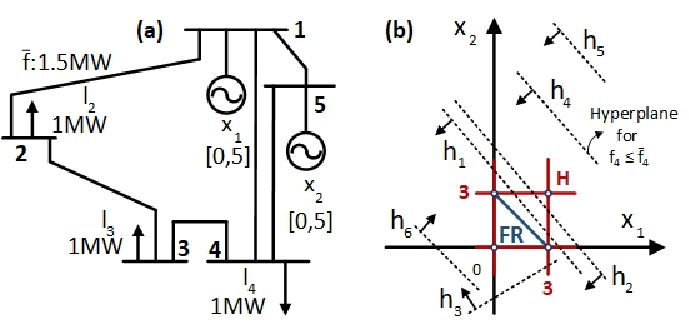}
	\caption{Toy example for VGS. \textbf{(a)} is a 5-bus system with 2 units and 6 lines. To visualize the outer approximation, we let $\mathbf{u}=\mathbf{1}$, and then this UC instance only has two variables $x_1$ and $x_2$ whose solution space is given in \textbf{(b)}. The red lines are the bounds of $x_1$ and $x_2$ obtained by \eqref{UC: y_bound}. They formulate the hyperrectangle \textcolor{Maroon}{\textbf{H}} containing the UC feasible region \textcolor{NavyBlue}{\textbf{FR}}. It shows the limits of $f_1-f_2$ are under the vertex (3,3), which means \eqref{vertices_check} is violated, while for line limits $f_3-f_6$, \eqref{vertices_check} holds and thus they are redundant for \textcolor{Maroon}{\textbf{H}}.} \label{GE:5bus}
\vspace{-1.5em}    
\end{figure}
Here we consider a hyperrectangle as the outer approximation for $R_{\Tilde{\mathbf{u}}}$, whose facets are the actual decision variables' bounds over $R_{\Tilde{\mathbf{u}}}$. Then, the vertices of the hyperrectangle can be directly given by the intersection of these bounds and replace the vertices of $R_{\Tilde{\mathbf{u}}}$ in Theorem \ref{vetice_theorem}. For each variable $y_p$, we can find its actual upper bound $\overline{y}_p$ or lower bound $\underline{y}_p$ by the following LP model,
\begin{subequations}\label{UC: y_bound}
\begin{align} 
\max_{\mathbf{y}} \quad or \quad  \min_{\mathbf{y}}& \quad y_p\\
s.t.~~~ &\Tilde{\mathbf{A}}\mathbf{y} \leq \Tilde{\mathbf{b}} \label{A_y_bound}
\end{align} 
\end{subequations}
we solve \eqref{UC: y_bound} for all variables to get $\overline{\mathbf{y}}$ and $\underline{\mathbf{y}}$. Then we can formulate a hyperrectangle $R_{bound}$ defined as follows, 
\label{UC: y_bound_range}
\begin{align} 
\underline{\mathbf{y}} \leq \mathbf{y} \leq \overline{\mathbf{y}}.
\end{align}    
\begin{proposition} \label{R_bound_contain}
The hyperrectangle $R_{bound}$ will contain the relaxed region $R_{\Tilde{u}}$ and the original feasible region $R_{uc}$.
\end{proposition}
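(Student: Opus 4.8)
The plan is to establish the two containments $R_{\Tilde{u}}\subseteq R_{bound}$ and then $R_{uc}\subseteq R_{bound}$, the latter following from the former by chaining with Proposition~1. The whole argument is essentially an unpacking of the fact that $R_{bound}$ is, by construction, the tightest coordinate-aligned bounding box of $R_{\Tilde{u}}$.

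First I would check that the LP \eqref{UC: y_bound} is well posed for every coordinate index $p$, so that $\overline{y}_p$ and $\underline{y}_p$ are finite and attained. Nonemptiness of its feasible set $\{\mathbf{y}:\Tilde{\mathbf{A}}\mathbf{y}\le\Tilde{\mathbf{b}}\}=R_{\Tilde{u}}$ follows from Proposition~1 (which gives $R_{uc}\subseteq R_{\Tilde{u}}$) together with the standing assumption that the UC instance \eqref{UC} is feasible. Boundedness of each objective $y_p$ on $R_{\Tilde{u}}$ follows from the generation bounds \eqref{UC:gen_0} with $\Tilde{u}_g\in[0,1]$, which confine every dispatch coordinate $x_g$ to a bounded interval, and from $\Tilde{u}_g\in[0,1]$ itself, which confines every relaxed status coordinate; hence by the fundamental theorem of linear programming the optimum in \eqref{UC: y_bound} is attained at a finite value for every $p$.

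Next, writing $\overline{y}_p=\max\{y_p:\mathbf{y}\in R_{\Tilde{u}}\}$ and $\underline{y}_p=\min\{y_p:\mathbf{y}\in R_{\Tilde{u}}\}$ by definition, I would take an arbitrary $\mathbf{y}\in R_{\Tilde{u}}$. Since $\mathbf{y}$ is feasible for each coordinate-optimization LP in \eqref{UC: y_bound}, optimality yields $\underline{y}_p\le y_p\le\overline{y}_p$ for all $p$, i.e.\ $\underline{\mathbf{y}}\le\mathbf{y}\le\overline{\mathbf{y}}$, which is exactly membership in the hyperrectangle $R_{bound}$ defined by $\underline{\mathbf{y}}\le\mathbf{y}\le\overline{\mathbf{y}}$. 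As $\mathbf{y}\in R_{\Tilde{u}}$ was arbitrary, $R_{\Tilde{u}}\subseteq R_{bound}$. Combining this with $R_{uc}\subseteq R_{\Tilde{u}}$ from Proposition~1 gives $R_{uc}\subseteq R_{\Tilde{u}}\subseteq R_{bound}$, which proves both assertions.

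I do not anticipate a genuine obstacle here: the claim is close to a definitional property of a bounding box, so the only point requiring a line of care is the well-posedness of \eqref{UC: y_bound} — feasibility and boundedness of each coordinate LP — which, as noted above, is secured by the box structure of the generation and relaxed-status constraints and by feasibility of the UC instance.
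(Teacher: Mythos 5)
Your proof is correct. The paper actually states Proposition 2 without any proof, treating it as immediate from the construction: $\overline{y}_p$ and $\underline{y}_p$ are by definition the coordinate-wise maxima and minima of $y_p$ over $R_{\Tilde{u}}$, so every $\mathbf{y}\in R_{\Tilde{u}}$ satisfies $\underline{\mathbf{y}}\le\mathbf{y}\le\overline{\mathbf{y}}$ and hence lies in $R_{bound}$, with $R_{uc}\subseteq R_{\Tilde{u}}$ supplied by Proposition 1 — which is exactly your argument. Your additional check that each LP in \eqref{UC: y_bound} is feasible and bounded (via the box structure of \eqref{UC:gen_0} and $\Tilde{u}_g\in[0,1]$) is a worthwhile piece of rigor the paper leaves implicit.
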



The following arguments also validate the feasibility of the proposed outer approximation method.

\begin{corollary} \label{R_bound_redundant}
Given that $R_{uc} \subseteq R_{bound}$ and according to Corollary \ref{C: transition}, the redundant constraints for $R_{bound}$ will be redundant for $R_{uc}$.
\end{corollary}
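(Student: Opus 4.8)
The plan is to observe that Corollary~\ref{R_bound_redundant} is an immediate instantiation of Corollary~\ref{C: transition}, so the whole task reduces to certifying the set inclusion $R_{uc} \subseteq R_{bound}$ (the content of Proposition~\ref{R_bound_contain}) and then invoking the transition property. First I would build the inclusion chain $R_{uc} \subseteq R_{\Tilde{\mathbf{u}}} \subseteq R_{bound}$. The left inclusion is already established by the earlier Proposition stating $R_{uc} \subseteq R_{\Tilde{\mathbf{u}}}$: replacing each binary constraint $u_g \in \{0,1\}$ by the interval relaxation $\Tilde{u}_g \in [0,1]$ can only enlarge the feasible set. For the right inclusion, take an arbitrary $\mathbf{y} \in R_{\Tilde{\mathbf{u}}}$, i.e. a point with $\Tilde{\mathbf{A}}\mathbf{y} \leq \Tilde{\mathbf{b}}$; for each coordinate index $p$, this $\mathbf{y}$ is a feasible point of both the maximization and the minimization in~\eqref{UC: y_bound}, whose optimal values are by definition $\overline{y}_p$ and $\underline{y}_p$, hence $\underline{y}_p \leq y_p \leq \overline{y}_p$. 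Since this holds for every $p$, $\mathbf{y}$ lies in the hyperrectangle $R_{bound}$, which yields $R_{\Tilde{\mathbf{u}}} \subseteq R_{bound}$ and, by composition, $R_{uc} \subseteq R_{bound}$.

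Second, I would make sure the bounds $\overline{y}_p,\underline{y}_p$ are genuinely finite so that $R_{bound}$ is a well-defined (compact) box rather than an unbounded slab. If $R_{uc} = \emptyset$ there is nothing to prove, so assume $R_{uc} \neq \emptyset$; then $R_{\Tilde{\mathbf{u}}}$ is nonempty, and each component of $\mathbf{y} = [\mathbf{x},\Tilde{\mathbf{u}}]$ is bounded over $R_{\Tilde{\mathbf{u}}}$, because $\Tilde{u}_g \in [0,1]$ and, from~\eqref{UC:gen_0} with the relaxed status, $0 \leq x_g \leq \bar{x}_g$. Hence every LP in~\eqref{UC: y_bound} is feasible and bounded and attains its optimum, so $\overline{\mathbf{y}}$ and $\underline{\mathbf{y}}$ exist and $R_{bound}$ is a legitimate hyperrectangle containing $R_{uc}$.

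Finally, with $R_{uc} \subseteq R_{bound}$ in hand, I would apply Corollary~\ref{C: transition} directly with $R_{scr,a} = R_{uc}$ and $R_{scr,b} = R_{bound}$: any constraint whose addition or removal leaves $R_{bound}$ unchanged also leaves its subset $R_{uc}$ unchanged, so redundancy for $R_{bound}$ transfers to redundancy for $R_{uc}$. I do not expect a substantive obstacle here; the only point worth stating carefully is the boundedness argument above, and---for full rigour---that ``redundant'' is read in the sense fixed by Corollary~\ref{c_1} (which matters once the screening model has already dropped the target constraint), consistent with how the term is used throughout the paper.
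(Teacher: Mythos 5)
Your proposal is correct and follows essentially the same route as the paper, which states this corollary as an immediate consequence of Proposition~\ref{R_bound_contain} (the inclusion $R_{uc} \subseteq R_{\Tilde{\mathbf{u}}} \subseteq R_{bound}$) combined with Corollary~\ref{C: transition}. The extra material you supply---the explicit argument that any feasible point of $R_{\Tilde{\mathbf{u}}}$ lies between the LP-optimal bounds, and the check that those LPs are feasible and bounded so the hyperrectangle is well defined---simply fills in details the paper leaves implicit, and is accurate.
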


Then, we can use the vertices of the hyperrectangle $R_{bound}$ to screen out the redundant constraints for the UC model. In  Fig. 
\ref{GE:5bus}, we showcase a 5-bus system as a case study to illustrate this process. Specifically, we use \eqref{UC: y_bound} to get the generation bounds for two units and then use them to build a hyperrectangle $\mathbf{H}$ to identify the redundant upper bounds of line flows. Four upper bounds satisfy Theorem \ref{vetice_theorem} for $\mathbf{H}$ and can thus be safely screened out.


\renewcommand{\algorithmicrequire}{\textbf{Input:}}
  \renewcommand{\algorithmicensure}{\textbf{Output:}}
  \begin{algorithm}[t] 
    \caption{\underline{E}nsemble \underline{O}f \underline{V}GS and \underline{L}FGS (EOVL)}
    \begin{algorithmic}[1] \label{Algorithm: OACS}
      \REQUIRE UC model \eqref{UC}, the index set $\mathcal{J}$ of line limits and decision variable index set $\mathcal{P}$, system parameters.
      \ENSURE Redundant constraint set $\mathcal{J}^{-}$, reduced UC model with constraint set $\mathcal{J}^{+}=\mathcal{J}-\mathcal{J}^{-}$.
      \renewcommand{\algorithmicensure}{\textbf{Initialize:}}
      \ENSURE  Let $\mathcal{J}^{-}=\varnothing$, $\mathcal{J}^{+}=\varnothing$, the bounds of the decision variables $\overline{\mathbf{y}} =: [\overline{\boldsymbol{x}}, \overline{\boldsymbol{u}}], \underline{\mathbf{y}} =: [\underline{\boldsymbol{x}}, \underline{\boldsymbol{u}}]$, and $j=1$.\\
     $\Box$ \textit{\textbf{Outer-Approximation:}}
      \\
     \STATE Let $\boldsymbol{u} \in \{0,1\} \Rightarrow \Tilde{\boldsymbol{u}}\in [0,1]$: \\ \quad Original UC model \eqref{UC} $\Rightarrow$ Relaxed UC model \eqref{UC: u_relax},
      \\ \quad Constraint set $\mathbf{A}\mathbf{y} \leq \mathbf{b} \Rightarrow \Tilde{\mathbf{A}}\mathbf{y} \leq \Tilde{\mathbf{b}}$.\\
    \STATE Solve model \eqref{UC: y_bound} with $\Tilde{\mathbf{A}}\mathbf{y} \leq \Tilde{\mathbf{b}}$: 
       \\ \quad Get maximum $\overline{y}^{*}_p$ and minimum $\underline{y}^{*}_p$, $\forall p \in \mathcal{P}$.
       \\
    \STATE  Update $\overline{\mathbf{y}}$ and $\underline{\mathbf{y}}$ with $\overline{y}^{*}_p$ and $\underline{y}^{*}_p$, $\forall p \in \mathcal{P}$.\\
    \STATE  Outer-approximation for the UC feasible region: \\~~~~~~~~~~~~~~~~~~~~$\underline{\mathbf{y}} \leq \mathbf{y} \leq \overline{\mathbf{y}}$
    \\
    $\Box$ \textit{\textbf{Screening-Vertex Guided}}\\
    \STATE Let $\mathbf{\omega} = \epsilon(\Tilde{\mathbf{A}})\circ\Tilde{\mathbf{A}}*(\overline{\mathbf{y}}-\underline{\mathbf{y}})+\Tilde{\mathbf{A}}*\underline{\mathbf{y}}-\Tilde{\mathbf{b}}$.\\
    \WHILE {$ j \leq |\mathcal{J}|$}
      \STATE If ${\omega}_{j} < 0$: $\mathcal{J}^{-}=\mathcal{J}^{-}\cup\{j\}$.\\
      \STATE $j = j+1$
      \ENDWHILE
    \\
    $\Box$ \textit{\textbf{Screening-Line Flow Guided} (Optional):}
      \WHILE {$\mathcal{J}^{-}+\mathcal{J}^{+}\neq \mathcal{J}$}
      \STATE $\Tilde{\mathcal{J}} = \mathcal{J}-\mathcal{J}^{-}-\mathcal{J}^{+}$.\\
      \STATE Maximize ($\Tilde{\mathbf{A}}_j\mathbf{y}$) s.t. ($\Tilde{\mathbf{A}}_{\mathcal{J}/j}\mathbf{y}\leq\Tilde{b}_{\mathcal{J}/j}$), $j \in \Tilde{\mathcal{J}}$.
      \STATE If $(\Tilde{\mathbf{A}}_j\mathbf{y})^* > \Tilde{b}_j$: $\mathcal{J}^{+}=\mathcal{J}^{+}\cup\{j\}$, \\else: $\mathcal{J}^{-}=\mathcal{J}^{-}\cup\{j\}$.
      \ENDWHILE
      \STATE Return $\mathcal{J}^{-}$.
    \end{algorithmic}
  \end{algorithm}

\subsection{Fast Identification by Bridging Vertices and Bounds}
The vertex $\boldsymbol{v}_q$ of $R_{bound}$ can be determined by the bounds $\overline{\mathbf{y}}$ and $\underline{\mathbf{y}}$ as follows,
\begin{align} \label{vertices}
  \boldsymbol{v}_q=:(v_{q,1},v_{q,2},..,v_{q,2|\mathcal{G}|}), v_{q,p} \in \{\underline{y}_p, \overline{y}_p\}
\end{align}

However, there are numerous vertices, and it is cumbersome to check the condition \eqref{vertices_check} for each vertex and each constraint. Thus, we present a fast criterion for identifying all redundant constraints through a matrix operation, which is derived by bridging the vertices and bounds via \eqref{vertices_check}. We use $\Phi(\cdot)$  to represent applying unit-step function $\phi(\cdot)$ to each element in the matrix, and $\circ$ denotes element-wise multiplication.

\begin{theorem} \label{bound__operation}
Let $\boldsymbol{{\omega}} = \Phi(\Tilde{\mathbf{A}})\circ\Tilde{\mathbf{A}}(\overline{\mathbf{y}}-\underline{\mathbf{y}})+\Tilde{\mathbf{A}}\underline{\mathbf{y}}-\Tilde{\mathbf{b}}$. If ${\omega}_j < 0$, then $\Tilde{\mathbf{A}}_j\mathbf{y} \leq \Tilde{b}_j$ associated with the $j$-th constraint is redundant for $R_{uc}$.
\end{theorem}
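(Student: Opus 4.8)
The plan is to show that the quantity $\omega_j$ is precisely the maximum value of $\tilde{\mathbf{A}}_j \mathbf{y} - \tilde{b}_j$ over the hyperrectangle $R_{bound}$, so that $\omega_j < 0$ certifies that every point of $R_{bound}$ satisfies the $j$-th constraint strictly; then invoke Theorem~\ref{vetice_theorem} (applied to $\hat R = R_{bound}$, whose vertices are exactly the points $\boldsymbol{v}_q$ of \eqref{vertices}) together with Corollary~\ref{R_bound_redundant} to conclude redundancy for $R_{uc}$. The key observation is separability: maximizing the linear functional $\tilde{\mathbf{A}}_j \mathbf{y} = \sum_p \tilde{A}_{j,p} y_p$ over the box $\underline{y}_p \le y_p \le \overline{y}_p$ decomposes coordinate-by-coordinate, and the optimum is attained by choosing $y_p = \overline{y}_p$ when $\tilde{A}_{j,p} > 0$ and $y_p = \underline{y}_p$ when $\tilde{A}_{j,p} \le 0$. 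This optimal vertex is one of the $\boldsymbol{v}_q$'s, and substituting it gives exactly $\sum_{p:\,\tilde{A}_{j,p}>0} \tilde{A}_{j,p}(\overline{y}_p - \underline{y}_p) + \sum_p \tilde{A}_{j,p}\underline{y}_p$, which is $(\Phi(\tilde{\mathbf{A}})\circ\tilde{\mathbf{A}})_j(\overline{\mathbf{y}}-\underline{\mathbf{y}}) + \tilde{\mathbf{A}}_j\underline{\mathbf{y}}$, since $\phi(\tilde{A}_{j,p}) = 1$ exactly on the positive coefficients. Hence $\omega_j = \max_{\mathbf{y}\in R_{bound}} \tilde{\mathbf{A}}_j\mathbf{y} - \tilde{b}_j$.

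First I would write down the box-decomposition argument cleanly: fix $j$, split the index set of columns into $\mathcal{P}^+ = \{p : \tilde{A}_{j,p} > 0\}$ and its complement, and note that $\tilde{A}_{j,p} y_p$ is maximized over $[\underline{y}_p, \overline{y}_p]$ at $\overline{y}_p$ for $p \in \mathcal{P}^+$ and at $\underline{y}_p$ otherwise (the case $\tilde{A}_{j,p} = 0$ being immaterial). Next I would rewrite the maximizing vertex's value algebraically: writing $y_p^\star = \underline{y}_p + \mathbf{1}[p \in \mathcal{P}^+](\overline{y}_p - \underline{y}_p)$ gives $\sum_p \tilde{A}_{j,p} y_p^\star = \sum_p \tilde{A}_{j,p}\underline{y}_p + \sum_{p\in\mathcal{P}^+}\tilde{A}_{j,p}(\overline{y}_p - \underline{y}_p)$, and identify $\mathbf{1}[p\in\mathcal{P}^+] = \phi(\tilde{A}_{j,p})$ so the second sum is the $j$-th entry of $\Phi(\tilde{\mathbf{A}})\circ\tilde{\mathbf{A}}(\overline{\mathbf{y}}-\underline{\mathbf{y}})$. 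Then I would subtract $\tilde b_j$ from both sides to match the definition of $\omega_j$, concluding $\omega_j = \max_{\mathbf{y} \in R_{bound}}(\tilde{\mathbf{A}}_j\mathbf{y} - \tilde b_j)$.

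To finish: if $\omega_j < 0$, then since the maximizer $\mathbf{y}^\star = \boldsymbol{v}_{q^\star}$ is a vertex of $R_{bound}$ and every vertex $\boldsymbol{v}_q$ (being a feasible point of the box) satisfies $\tilde{\mathbf{A}}_j \boldsymbol{v}_q \le \tilde{\mathbf{A}}_j \boldsymbol{v}_{q^\star} = \omega_j + \tilde b_j < \tilde b_j$, the strict inequality \eqref{vertices_check} holds at every vertex of $R_{bound}$. By Theorem~\ref{vetice_theorem} the constraint $\tilde{\mathbf{A}}_j\mathbf{y} \le \tilde b_j$ is redundant for $R_{bound}$, and by Corollary~\ref{R_bound_redundant} (via $R_{uc} \subseteq R_{bound}$ and Corollary~\ref{C: transition}) it is redundant for $R_{uc}$.

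\textbf{Main obstacle.} The computation itself is routine; the one point requiring care is the handling of the unit-step/$\Phi$ convention at zero coefficients and making sure the claimed maximizing point is genuinely among the vertex set \eqref{vertices} so that Theorem~\ref{vetice_theorem} applies verbatim — i.e.\ confirming that the box's vertices are exactly the $2^{2|\mathcal{G}|}$ corner points and that $\omega_j$ equals the left-hand value at the specific corner, not merely an upper bound on it. A secondary subtlety is the strictness: Theorem~\ref{vetice_theorem} requires strict inequality at all vertices, so I must pass from $\omega_j < 0$ to $\tilde{\mathbf{A}}_j\boldsymbol{v}_q < \tilde b_j$ for \emph{all} $q$, which follows because $\omega_j$ is the \emph{maximum} over vertices, not an arbitrary one.
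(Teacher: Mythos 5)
Your proposal is correct and follows essentially the same route as the paper: split the coordinates by the sign of $\tilde{A}_{j,p}$, bound each term by the appropriate endpoint of $[\underline{y}_p,\overline{y}_p]$, identify the resulting expression with the $j$-th entry of $\Phi(\tilde{\mathbf{A}})\circ\tilde{\mathbf{A}}(\overline{\mathbf{y}}-\underline{\mathbf{y}})+\tilde{\mathbf{A}}\underline{\mathbf{y}}-\tilde{\mathbf{b}}$, and conclude via Theorem~\ref{vetice_theorem} and Corollary~\ref{R_bound_redundant}. The only (harmless) difference is that you prove $\omega_j$ is the \emph{exact} maximum of $\tilde{\mathbf{A}}_j\mathbf{y}-\tilde{b}_j$ over $R_{bound}$, attained at a specific corner, whereas the paper only establishes it as an upper bound over the vertices — both suffice for the conclusion.
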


\begin{proof}
Consider a constraint $\Tilde{\mathbf{A}}_j\mathbf{y} \leq \Tilde{b}_j$, \eqref{vertices_check} can be reformulated as,
\begin{align} \label{A_j}
 \sum_{p=1}^{2|\mathcal{G}|} a_{j,p}v_{q,p} < \Tilde{b}_j, \forall q \in \mathcal{Q};
\end{align}
where $a_{j,p}$ is the coefficient for the $p$-th decision variable in the $j$-th constraint. 

For the coefficient $a_{j,p} \leq 0$, the following inequality holds,
\begin{align} \label{negative}
a_{j,p}\overline{y}_p \leq a_{j,p}v_{q,p} \leq a_{j,p}\underline{y}_p,
\end{align}
while for the coefficient $a_{j,p} \geq 0$, we have, 
\begin{align} \label{positive_1}
a_{j,p}\underline{y}_p \leq a_{j,p}v_{q,p} \leq a_{j,p}\overline{y}_p,
\end{align}
which can be converted to,
\begin{align} \label{positive_2}
a_{j,p}\underline{y}_p \leq  a_{j,p}v_{q,p} \leq a_{j,p}(\overline{y}_p-\underline{y}_p) + a_{j,p}\underline{y}_p,
\end{align}
combining \eqref{negative} and \eqref{positive_2}, the following relationship can hold for any vertice $\boldsymbol{v}_q$,
\begin{align} \label{positive_all}
\sum_{p=1}^{2|\mathcal{G}|} a_{j,p}v_{q,p} - \Tilde{b}_j \leq \sum_{p=1}^{2|\mathcal{G}|} \phi(a_{j,p})(\overline{y}_p-\underline{y}_p)+a_{j,p}\underline{y}_p - \Tilde{b}_j.
\end{align}

The right-hand side can be further derived for the whole constraint set as follows,
\begin{align} \label{reundant_check}
\boldsymbol{\omega} = \Phi(\Tilde{\mathbf{A}})\circ\Tilde{\mathbf{A}}(\overline{\mathbf{y}}-\underline{\mathbf{y}})+\Tilde{\mathbf{A}}\underline{\mathbf{y}}-\Tilde{\mathbf{b}},
\end{align}
if $\omega_j < 0$, the following inequation can hold for all vertices,
\begin{align} \label{screened_con}
\sum_{p=1}^{2|\mathcal{G}|} a_{j,p}v_{q,p} < \Tilde{b}_j;
\end{align}
according to Theorem \ref{vetice_theorem} and Corollary \ref{R_bound_redundant}, the constraint $\Tilde{\mathbf{A}}_j\mathbf{y} \leq \Tilde{b}_j$ will be redundant for $R_{bound}$ and $R_{uc}$. 
\end{proof}

\subsection{Computational Complexity Analysis}
The computational complexity of VGS arises from the LP model and the matrix operation. Let $O(\mathcal{C}_{LP})$ denote the complexity of solving a LP, where $\mathcal{C}_{LP} \in [|\mathcal{G}|^3, 4^{|\mathcal{G}|}]$ depending on the solver. Since we need to solve $2|\mathcal{G}|$ LPs, the total complexity for this process is $O(|\mathcal{G}|\mathcal{C}_{LP})$. Regarding the matrix operation in \eqref{reundant_check}, its complexity is in the order of $O(|\mathcal{G}||\mathcal{J}|+|\mathcal{G}|)$. Then the overall complexity of VGS is in the order of $(|\mathcal{G}|\mathcal{C}_{LP}+|\mathcal{G}||\mathcal{J}|+|\mathcal{G}|)$, which can be denoted as $O(|\mathcal{G}|\mathcal{C}_{LP})$. 

Here we further compare the computational cost of VGS and LFGS. The LP model in VGS has $2|\mathcal{G}|$ variables and $4|\mathcal{G}|+|\mathcal{J}|+2$ constraints, while LFGS has $2|G|$ variables and  $4|\mathcal{G}|+|\mathcal{J}|+1$ constraints. Thus, the scale of LP for both methods is comparable, and the computational complexities of their LPs are equivalent, i.e., $O(\mathcal{C}_{LP})$. For LFGS, we need to solve $|J|$ LPs, so its computation complexity is $O(|J|\mathcal{C}_{LP})$. In the UC problems, $|G|$ is generally much smaller than $|\mathcal{J}|$. This means conducting VGS is faster than LFGS, while LFGS may have a tighter screened region and remove more constraints. 

To match LFGS performance while accelerating the screening, we use an ensemble way that applies VGS first, followed by LFGS on the remaining line limits, and we term this \underline{E}nsemble \underline{O}f \underline{V}GS and \underline{L}FGS method as EOVL described in Algorithm \ref{Algorithm: OACS}. Denote the number of removed limits by VGS as $N_{V}$. Then, the complexity of EOVL is $O((|G|+(|\mathcal{J}|-N_{V}))\mathcal{C}_{LP})$, while the worst case is $O((|G|+|J|)\mathcal{C}_{LP})$. Furthermore, we can estimate the execution time ratio between LFGS and EOVL by $\frac{2|G|\Delta t_{V}+(|\mathcal{J}|-N_{V})\Delta t_{LF}}{|\mathcal{J}|\Delta t_{LF}}$. Thus, as $\Delta t_{V}\approx \Delta t_{LF}$ and $|N_{V}| > 2|G|$, we can still  realize a considerable screening acceleration. 



\section{Improvement for Vertex-Guided Screening} \label{Section_4}
Different variables' bounds create heterogeneous outer approximations and vertices, leading to varying redundancy statuses $\boldsymbol{\omega}$ of constraints in \eqref{reundant_check}. This section explores ways to further improve the vertex-guided screening regarding load variation applicability and the sufficiency of redundancy removal by strategically adjusting the variables' bounds.

\begin{figure*}[t] 
    \vspace{-1em}
	\centering
\includegraphics[scale=0.7]{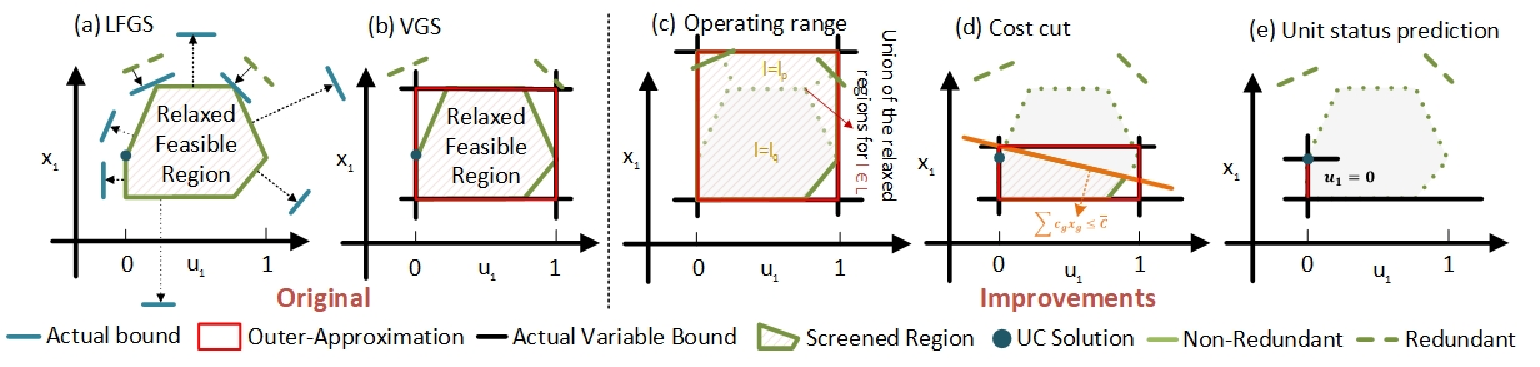}
 \vspace{-2em}
 \caption{Illustration of the LP-based screening schemes. (a) and (b) showcase that VGS can identify the same redundant constraints using fewer bounds given by solving LPs. In (c), the screened region expands to the union of screened regions for all $\boldsymbol{l} \in \mathcal{L}$, removing fewer constraints while bypassing screening for each $\boldsymbol{l} \in \mathcal{L}$. (d) and (e) showcase how the cost bound or unit status prediction further cuts the screened regions to remove more constraints.}
	\label{GE:fine-tuning}
 \vspace{-1em}
\end{figure*}
\subsection{Integration of Load Operating Range}
Empirical evidence indicates that UC instances associated with certain load operating ranges may share partial redundant constraints in their respective load input. Thus, by means of screening constraints at the load operating range $\mathcal{L}$ rather than a single load sample, we can find such intersection of redundant constraints, and bypass screening for each UC instance with specific load inputs \cite{ardakani2013identification, roald2019implied, awadalla2022influence}. To implement this, the load inputs $\boldsymbol{\ell}$ are included as the decision variable in the LP screening model as follows,
\begin{subequations}\label{UC: y_bound_range}
\begin{align} 
\max_{\mathbf{y}, \boldsymbol{\ell}} \quad or \quad   \min_{\mathbf{y}, \boldsymbol{\ell}}  \quad & y_p\\
s.t.\;\; &\Tilde{\mathbf{A}}\mathbf{y} \leq \Tilde{\mathbf{b}} \label{A_y_bound}
\\ & \boldsymbol{\ell} \in \mathcal{L}.
\end{align}    
\end{subequations}
Model \eqref{UC: y_bound_range} aims to find the upper bound and lower bound for each UC decision variable $y_p$ over  $\mathcal{L}$. With these bounds, a hyperrectangle $R_{bound, \mathcal{L}}$ is built as shown in Fig. \ref{GE:fine-tuning}(c). $R_{bound, \mathcal{L}}$ is supposed to identify the mutual redundant constraints for $\boldsymbol{\ell} \in \mathcal{L}$, as illustrated in Proposition 3.
\begin{proposition} \label{operating_range}
The redundant constraints for $R_{bound, \mathcal{L}}$ will be redundant for the feasible region $R_{uc, \hat{\boldsymbol{l}}}$ of the UC instance with $\hat{\boldsymbol{l}} \in \mathcal{L}$. 
\end{proposition}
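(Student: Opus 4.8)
The plan is to show that the hyperrectangle $R_{bound,\mathcal{L}}$ built from the range-aware bounds of \eqref{UC: y_bound_range} contains the feasible region $R_{uc,\hat{\boldsymbol{l}}}$ for every fixed load $\hat{\boldsymbol{l}}\in\mathcal{L}$, and then invoke Corollary \ref{C: transition} to transfer redundancy downward. So the first step is to set up the containment chain. Fix any $\hat{\boldsymbol{l}}\in\mathcal{L}$ and let $R_{uc,\hat{\boldsymbol{l}}}$ be the UC feasible region with that load; by Proposition 1 (the $\mathbf{u}\mapsto\Tilde{\mathbf{u}}$ relaxation) we have $R_{uc,\hat{\boldsymbol{l}}}\subseteq R_{\Tilde{\mathbf{u}},\hat{\boldsymbol{l}}}$, where $R_{\Tilde{\mathbf{u}},\hat{\boldsymbol{l}}}$ is the LP relaxation with load pinned to $\hat{\boldsymbol{l}}$.

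Second, I would argue $R_{\Tilde{\mathbf{u}},\hat{\boldsymbol{l}}}\subseteq R_{bound,\mathcal{L}}$ by a coordinate-wise bounding argument. Take any feasible $\mathbf{y}\in R_{\Tilde{\mathbf{u}},\hat{\boldsymbol{l}}}$. Then the pair $(\mathbf{y},\hat{\boldsymbol{l}})$ is feasible for the constraint system of \eqref{UC: y_bound_range} (it satisfies $\Tilde{\mathbf{A}}\mathbf{y}\le\Tilde{\mathbf{b}}$ and $\hat{\boldsymbol{l}}\in\mathcal{L}$), hence for each component $p$ the value $y_p$ lies between the optimal values of the min and max versions of \eqref{UC: y_bound_range}; that is, $\underline{y}_p\le y_p\le\overline{y}_p$ where $\overline{\mathbf{y}},\underline{\mathbf{y}}$ are exactly the bounds defining $R_{bound,\mathcal{L}}$. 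Since this holds for all $p$, we get $\mathbf{y}\in R_{bound,\mathcal{L}}$. Chaining the two inclusions gives $R_{uc,\hat{\boldsymbol{l}}}\subseteq R_{bound,\mathcal{L}}$.

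Third, I would close the argument: suppose a constraint $j$ is identified as redundant for $R_{bound,\mathcal{L}}$ via a screening model (e.g.\ by Theorem \ref{bound__operation}/Theorem \ref{vetice_theorem}). Because $R_{uc,\hat{\boldsymbol{l}}}\subseteq R_{bound,\mathcal{L}}$, Corollary \ref{C: transition} applies verbatim and yields that constraint $j$ is redundant for $R_{uc,\hat{\boldsymbol{l}}}$. Since $\hat{\boldsymbol{l}}\in\mathcal{L}$ was arbitrary, the redundant constraints of $R_{bound,\mathcal{L}}$ are common redundant constraints across the whole load range.

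The main obstacle I anticipate is not conceptual but in being careful about how the load enters the relaxed constraint matrix: in the compact form \eqref{UC: u_relax} the load $\boldsymbol{\ell}$ appears inside $\Tilde{\mathbf{b}}$ (via the flow and balance rows), so when $\boldsymbol{\ell}$ becomes a decision variable in \eqref{UC: y_bound_range} the correct reading is that $\Tilde{\mathbf{A}}\mathbf{y}\le\Tilde{\mathbf{b}}$ must be re-expressed with $\boldsymbol{\ell}$ moved to the left-hand side as a joint linear system in $(\mathbf{y},\boldsymbol{\ell})$. I would state this re-parametrization explicitly (or simply note that \eqref{A_y_bound} in \eqref{UC: y_bound_range} is understood with $\boldsymbol{\ell}$ as a variable) so that the claim ``$(\mathbf{y},\hat{\boldsymbol{l}})$ is feasible for \eqref{UC: y_bound_range}'' is literally true; once that bookkeeping is pinned down, the rest is the routine containment-plus-Corollary \ref{C: transition} argument. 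A minor secondary point worth a sentence is that $\mathcal{L}$ should be such that \eqref{UC: y_bound_range} is bounded (e.g.\ $\mathcal{L}$ a bounded box), so that $\overline{y}_p,\underline{y}_p$ are finite and $R_{bound,\mathcal{L}}$ is a genuine hyperrectangle.
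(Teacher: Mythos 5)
Your proposal is correct and follows essentially the same route as the paper: establish the containment $R_{uc,\hat{\boldsymbol{l}}}\subseteq R_{bound,\mathcal{L}}$ and then invoke Corollary \ref{C: transition}. The only (immaterial) difference is that you pass directly from the fixed-load LP relaxation $R_{\Tilde{\mathbf{u}},\hat{\boldsymbol{l}}}$ to $R_{bound,\mathcal{L}}$ by noting $(\mathbf{y},\hat{\boldsymbol{l}})$ is feasible for \eqref{UC: y_bound_range}, whereas the paper routes through the intermediate fixed-load hyperrectangle $R_{bound,\hat{\boldsymbol{l}}}$ by comparing optimal bound values; your observation about re-expressing $\boldsymbol{\ell}$ on the left-hand side is a worthwhile clarification the paper leaves implicit.
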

\begin{proof}
Denote the optimal objective of \eqref{UC: y_bound_range} for each decision variable $y_p$ as $\overline{y}^*_p(\overline{\boldsymbol{l}}_p^*)$ and $\underline{y}^*_p(\underline{\boldsymbol{l}}_p^*)$, where $\overline{\boldsymbol{l}}_p^*$ and $\underline{\boldsymbol{l}}_p^*$ are the optimal solution for the load variables, then for any feasible $\boldsymbol{l} \in \mathcal{L}$, we have $\overline{y}^*_p(\boldsymbol{l})\leq \overline{y}^*_p(\overline{\boldsymbol{l}}_p^*)$ and $\underline{y}^*_p(\boldsymbol{l})\geq \underline{y}^*_p(\underline{\boldsymbol{l}}_p^*)$. For the specific load inputs $\hat{\boldsymbol{l}} \in \mathcal{L}$, we denote the hyperrectangle built by $\overline{y}^*_p(\hat{\boldsymbol{l}})$ and $\underline{y}^*_p(\hat{\boldsymbol{l}})$ from \eqref{UC: y_bound} as $R_{bound,\hat{\boldsymbol{l}}}$. For each point $\boldsymbol{y} \in R_{bound,\hat{\boldsymbol{l}}}$, it will satisfy $y_p \leq y^*_p(\hat{\boldsymbol{l}}) \leq y^*_p(\boldsymbol{l}_p^*)$. Thus, we have $R_{bound,\hat{\boldsymbol{l}}} \subseteq R_{bound,\mathcal{L}}$. 

Based on Proposition \ref{R_bound_contain} and Corollary \ref{R_bound_redundant}, it can be further proved that $R_{uc, \hat{\boldsymbol{l}}} \subseteq R_{bound,\hat{\boldsymbol{l}}} \subseteq R_{bound,\mathcal{L}}$ and thus a redundant constraint for $R_{bound,\mathcal{L}}$ will be redundant for $R_{uc, \hat{\boldsymbol{l}}}$.
\end{proof}
Then, by applying Theorem \ref{bound__operation} on $R_{bound, L}$, we can find the redundant constraints for $R_{bound, L}$, which will be redundant for $\forall \boldsymbol{\ell} \in \mathcal{L}$. This shows that VGS for an operating range can provide applicability across varying load inputs.

\subsection{Integration of Cost Cut}
Fig. \ref{Fig-intro-constraint} indicates a smaller $R_{bound}$ derived from a tighter $R_{\Tilde{\mathbf{u}}}$ can find more redundant constraints, motivating us to tighten $R_{\Tilde{\mathbf{u}}}$ strategically. \cite{porras2021cost} observes that the solutions belong to $R_{\Tilde{\mathbf{u}}}$ or $R_{uc}$ yield varying UC costs. A cutting plane corresponding to a specific UC cost $\overline{C}$ can then be used to tighten $R_{uc}$ and $R_{\Tilde{\mathbf{u}}}$  as shown in Fig. \ref{GE:fine-tuning}(d), ensuring all remaining solutions have costs below $\overline{C}$. Integrating such cost cut, \eqref{UC: y_bound} can be developed as follows,
\begin{subequations}\label{UC: y_bound_cost}
\begin{align} 
\max_{y} \quad or \quad  & \min_{y}  \quad y_p\\
s.t.~ &\Tilde{\mathbf{A}}\mathbf{y} \leq \Tilde{\mathbf{b}} \\
&\sum_{g=1}^{|\mathcal{G}|} c_g x_g \leq \overline{C} \label{Screening: cost bound}
\end{align} 
\end{subequations}
To ensure UC feasibility and prevent the optimal UC solution from being excluded, it is necessary to ensure $\overline{C}$ is above the optimal UC cost $C^*$. 
\begin{proposition} \label{VGS:cost}
Given $\overline{C} \geq C^*$, the outer approximation $R_{bound}$ considering $\overline{C}$ can be tightened to $R_{bound, \overline{C}}$. The screening on $R_{bound, \overline{C}}$ will not change the optimal solution $\boldsymbol{y}^*_{uc}$ and possibly remove more constraints.
\end{proposition}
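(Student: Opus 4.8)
The plan is to establish two things in sequence: first, that $R_{bound,\overline{C}} \subseteq R_{bound}$ (so the cost cut genuinely tightens the outer approximation and, by Corollary~\ref{R_bound_redundant}-type reasoning, can only help remove more constraints); and second, that the optimal UC solution $\boldsymbol{y}^*_{uc}$ survives the tightening, i.e., $\boldsymbol{y}^*_{uc} \in R_{bound,\overline{C}}$, so that screening against this smaller region does not discard any constraint that is actually active at the optimum. The containment $R_{bound,\overline{C}} \subseteq R_{bound}$ should follow directly from the same argument as in Proposition~\ref{operating_range}: adding the constraint $\sum_g c_g x_g \leq \overline{C}$ to the LP~\eqref{UC: y_bound_cost} shrinks its feasible set relative to~\eqref{UC: y_bound}, so each optimized bound can only move inward, $\overline{y}^{*,\overline{C}}_p \leq \overline{y}^*_p$ and $\underline{y}^{*,\overline{C}}_p \geq \underline{y}^*_p$, whence the hyperrectangle they define is contained in the original one.

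Next I would argue feasibility-preservation. Since $\overline{C} \geq C^*$, the optimal UC point $\boldsymbol{y}^*_{uc}$ satisfies $\sum_g c_g x^*_g = C^* \leq \overline{C}$, so $\boldsymbol{y}^*_{uc}$ lies in the feasible region of~\eqref{UC: y_bound_cost}; in particular it satisfies $\underline{y}^{*,\overline{C}}_p \leq y^*_{uc,p} \leq \overline{y}^{*,\overline{C}}_p$ for every $p$, i.e., $\boldsymbol{y}^*_{uc} \in R_{bound,\overline{C}}$. Now invoke Theorem~\ref{bound__operation}/Theorem~\ref{vetice_theorem} applied to $R_{bound,\overline{C}}$: any constraint screened out is redundant for $R_{bound,\overline{C}}$, and since $\boldsymbol{y}^*_{uc} \in R_{bound,\overline{C}} \subseteq R_{bound}$ and the UC feasible region with the cost cut, $R_{uc,\overline{C}} := R_{uc} \cap \{\sum_g c_g x_g \leq \overline{C}\}$, still contains $\boldsymbol{y}^*_{uc}$, removing these constraints leaves a relaxed UC model whose minimum-cost feasible point is unchanged. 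More carefully: a constraint redundant for $R_{bound,\overline{C}}$ is redundant for $R_{uc,\overline{C}}$ (Corollary~\ref{C: transition}, using $R_{uc,\overline{C}} \subseteq R_{bound,\overline{C}}$), and since $C^* \leq \overline{C}$ the cost cut itself does not alter the optimal solution of the UC problem, so the optimum over the screened model coincides with $\boldsymbol{y}^*_{uc}$. The "possibly remove more constraints" claim is immediate from $R_{bound,\overline{C}} \subseteq R_{bound}$ together with Corollary~\ref{C: transition}: every constraint redundant for $R_{bound}$ is redundant for $R_{bound,\overline{C}}$, so the screened set can only grow.

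The main obstacle I anticipate is not the geometric containment — that is routine — but pinning down precisely what "not change the optimal solution" should mean and why it holds, since the screened model is the UC problem with some line limits deleted \emph{and} (implicitly) with the cost cut available as a valid inequality. The subtlety is that we must ensure the deleted constraints were never binding at \emph{any} optimal solution of the original UC, not merely at one particular $\boldsymbol{y}^*_{uc}$; the clean way around this is to observe that redundancy for $R_{uc,\overline{C}}$ means the feasible region is set-wise unchanged by adding/removing the constraint, so the entire optimal face is preserved, and then to note that intersecting $R_{uc}$ with $\{\sum_g c_g x_g \le \overline C\}$ for $\overline C \ge C^*$ preserves the full set of cost-optimal points. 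A secondary, more technical point worth a sentence is degeneracy: if $\overline{C} = C^*$ exactly, the cut passes through the optimal face and one should check the screening criterion~\eqref{vertices_check} still uses strict inequality so that a constraint tangent to $R_{bound,\overline{C}}$ at $\boldsymbol{y}^*_{uc}$ is \emph{not} erroneously flagged redundant — which is exactly why Theorem~\ref{vetice_theorem} was stated with ``$<$'' rather than ``$\leq$''.
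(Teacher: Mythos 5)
Your proposal is correct and follows essentially the same route as the paper's proof: establish the containment chain $R_{uc,\overline{C}} \subseteq R_{\Tilde{u},\overline{C}} \subseteq R_{bound,\overline{C}} \subseteq R_{bound}$, note that $\overline{C}\geq C^*$ keeps $\boldsymbol{y}^*_{uc}$ feasible for the cut region, invoke the transition corollary to transfer redundancy down to $R_{uc,\overline{C}}$, and use the tightened bounds (the paper phrases this as $\boldsymbol{\omega}_{\overline{C}}\leq\boldsymbol{\omega}$) to conclude that more constraints may be flagged. Your closing remarks on the optimal face under degeneracy and on the strict inequality when $\overline{C}=C^*$ are sensible refinements the paper does not spell out, but they do not change the argument.
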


\begin{proof}
Denote the feasible region of model \eqref{UC: y_bound_cost} as $R_{\Tilde{u}, \overline{C}}$. By integrating \eqref{Screening: cost bound} into the UC model \eqref{UC}, we obtain a tighter UC feasible region $R_{uc, \overline{C}} \subseteq R_{uc}$. Given $\overline{C} \geq C^*$, we have $\boldsymbol{y}^*_{uc} \in R_{uc, \overline{C}}$. Thus, $\boldsymbol{y}^*_{uc}$ is optimal for both $R_{uc}$ and $R_{uc, \overline{C}}$. $R_{bound, \overline{C}}$ is the outer approximation of the region with relaxed $\mathbf{u}$, denoted as $R_{\Tilde{u}, \overline{C}}$, according to Proposition \ref{R_bound_contain}, we have $R_{uc, \overline{C}} \subseteq R_{\Tilde{u}, \overline{C}} \subseteq R_{bound, \overline{C}}$. According to Corollary \ref{R_bound_redundant}, the redundant constraints for $R_{bound, \overline{C}}$ will be redundant for $R_{uc, \overline{C}}$ and can be removed without changing $R_{uc, \overline{C}}$ so that remain $\boldsymbol{y}^*_{uc}$ optimal.

Since $R_{\Tilde{u}, \overline{C}} \subseteq R_{\Tilde{u}}$, we have $\overline{\mathbf{y}}_{\overline{C}} \leq \overline{\mathbf{y}}$ and $\underline{\mathbf{y}}_{\overline{C}} \geq \underline{\mathbf{y}}$, which indicates $\boldsymbol{\omega}_{\overline{C}} \leq \boldsymbol{\omega}$. Then, compared to the screening on $R_{bound}$, more constraints which satisfy $\omega_{j,\overline{C}} < 0, \omega_{j} \geq 0$ can be removed. 
\end{proof}


Therefore, carefully selecting $\overline{C}$ is essential to eliminate as many constraints as possible while ensuring the optimal UC solution remains unchanged. Herein, for a specific load sample $\boldsymbol{l}$, we use a trained neural network $f_{NN}(\boldsymbol{\ell})$ to predict the UC costs. Finally, we introduce a relaxation parameter $\epsilon$ to account for prediction uncertainties, expressed as $f_{NN}(\boldsymbol{\ell})(1+\epsilon)$. This relaxed prediction is intended to slightly exceed the actual cost $C^*$ and serve as an upper limit $\overline{C}$.


 
\subsection{Integration of Unit Status Prediction}
Besides the UC cost, we can also utilize the information on the historical unit status situation, and further restrict the bounds used in \eqref{UC: y_bound} to remove more constraints. Specifically, as depicted in Fig. \ref{GE:fine-tuning}(e), we can add a cutting plane \eqref{Screening: unit bound} corresponding to the predicted on/off status for partial units as follows,
\begin{align}  
u(k) = \hat{u}(k), \; k \in \mathcal{K} \label{Screening: unit bound}
\end{align}
where $\hat{u}(k)$ is the predicted on/off status of unit $k$. $\mathcal{K}$ denotes the index set of predicted units and $\mathcal{K} \subseteq \mathcal{G}$ holds. We will add \eqref{Screening: unit bound} to the UC model \eqref{UC} to get the final UC solution in the feasible region $R_{uc,\hat{u}(k)}$ and the outer-approximation model \eqref{UC: y_bound} to find the tightened variables' bounds over $R_{\Tilde{u}, \hat{u}(k)}$ for building the hyperrectangle $R_{bound, \hat{u}(k)}$.  
\begin{proposition} \label{VGS:unit}
The redundant constraints for $R_{bound, \hat{u}(k)}$ will be redundant for $R_{uc,\hat{u}(k)}$.
\end{proposition}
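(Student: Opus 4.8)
The plan is to replicate, almost verbatim, the inclusion-chain argument used for Proposition~\ref{VGS:cost}, now carrying the unit-status cut \eqref{Screening: unit bound} in place of the cost cut \eqref{Screening: cost bound}. Concretely, the target is the chain
$R_{uc,\hat{u}(k)} \subseteq R_{\Tilde{u}, \hat{u}(k)} \subseteq R_{bound, \hat{u}(k)}$,
after which Corollary~\ref{C: transition} immediately transfers redundancy downward.

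First I would verify $R_{uc,\hat{u}(k)} \subseteq R_{\Tilde{u}, \hat{u}(k)}$. Both regions come from the same model: $R_{uc,\hat{u}(k)}$ is \eqref{UC} augmented by \eqref{Screening: unit bound}, while $R_{\Tilde{u}, \hat{u}(k)}$ is obtained from it by relaxing $\mathbf{u}\in\{0,1\}$ to $\Tilde{\mathbf{u}}\in[0,1]$. Since $\hat{u}(k)\in\{0,1\}\subset[0,1]$, the appended equalities are feasible under the relaxation and are literally identical in both formulations, and relaxing integrality only enlarges the feasible set; hence the inclusion follows exactly as in the base argument that $R_{uc}\subseteq R_{\Tilde{\mathbf{u}}}$. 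Next, because $R_{bound, \hat{u}(k)}$ is, by construction, the hyperrectangle built from the bounds $\overline{\mathbf{y}}_{\hat{u}(k)},\underline{\mathbf{y}}_{\hat{u}(k)}$ returned by solving \eqref{UC: y_bound} over $R_{\Tilde{u}, \hat{u}(k)}$, Proposition~\ref{R_bound_contain}—applied with $R_{\Tilde{u}, \hat{u}(k)}$ playing the role of $R_{\Tilde{\mathbf{u}}}$—yields $R_{\Tilde{u}, \hat{u}(k)} \subseteq R_{bound, \hat{u}(k)}$. Composing the two inclusions gives $R_{uc,\hat{u}(k)} \subseteq R_{bound, \hat{u}(k)}$, and Corollary~\ref{C: transition} (the same reasoning as Corollary~\ref{R_bound_redundant}) then transfers any constraint flagged redundant for $R_{bound, \hat{u}(k)}$—for instance by Theorem~\ref{bound__operation} applied to $R_{bound, \hat{u}(k)}$—down to $R_{uc,\hat{u}(k)}$, completing the proof.

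The one place I expect to need care, rather than any single inclusion, is making precise how the fixing $u(k)=\hat{u}(k)$ interacts with the generation bounds \eqref{UC:gen_0} under the relaxation: once $u(k)$ is fixed, the box $u(k)\underline{x}_k \le x_k \le u(k)\bar{x}_k$ collapses to a constant interval, so for the predicted units in $\mathcal{K}$ the relaxation $\Tilde{\mathbf{u}}\in[0,1]$ has no further effect, while for the remaining units it acts precisely as in the base case; one should confirm this does not disturb the bound-extraction step underlying Proposition~\ref{R_bound_contain}. I would also record, for completeness, the analogue of the $\overline{C}\ge C^{*}$ caveat of Proposition~\ref{VGS:cost}: for the reduced model defined on $R_{uc,\hat{u}(k)}$ to still contain a genuine UC optimum, the prediction $\hat{u}(k)$ must coincide with an optimal on/off schedule on $\mathcal{K}$—but this is a feasibility remark about using the reduced model, and is not needed for the redundancy-inheritance statement itself, which holds for any choice of $\hat{u}(k)$.
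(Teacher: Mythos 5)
Your proposal is correct and follows essentially the same route as the paper: establish the chain $R_{uc,\hat{u}(k)} \subseteq R_{\Tilde{u}, \hat{u}(k)} \subseteq R_{bound, \hat{u}(k)}$ via Proposition~\ref{R_bound_contain} and then invoke Corollary~\ref{R_bound_redundant} to push redundancy down to $R_{uc,\hat{u}(k)}$. The extra remarks you add (the interaction of the fixed $u(k)$ with the generation bounds, and the optimality caveat on $\hat{u}(k)$) are sensible elaborations but not a different argument.
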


\begin{proof}
Given that  $R_{bound, \hat{u}(k)}$ is the outer approximation of $R_{\Tilde{u} , \hat{u}(k)}$ and $R_{uc, \hat{u}(k)} \subseteq R_{\Tilde{u}, \hat{u}(k)}$, according to Proposition \ref{R_bound_contain}, we have $R_{uc, \hat{u}(k)} \subseteq R_{\Tilde{u}, \hat{u}(k)} \subseteq R_{bound, \hat{u}(k)}$. Thus, according to Corollary \ref{R_bound_redundant}, the redundant constraints for $R_{bound, \hat{u}(k)}$ will be redundant for $R_{uc, \hat{u}(k)}$.
\end{proof}

Thus, by applying Theorem \ref{bound__operation} on $R_{bound, \hat{u}(k)}$, we can find the constraints redundant for $R_{uc,\hat{u}(k)}$ reliably. Note these screened may involve more constraints that are non-redundant for $R_{uc}$, while $\hat{u}(k)$ can be non-optimal. This suggests a tradeoff between solution accuracy and screening efficiency.

\begin{table}[t]
\vspace{-1.5em}
\centering
\caption{Settings of Benchmarked Constraint Screening Schemes} \label{VGS: Schemes}
\setlength{\tabcolsep}{1.8mm}{
\begin{tabular}{cccccc}
\hline
Scheme & LFGS & VGS & Load Range                         & Cost Cut                         & Unit Status Cut                         \\ \hline
S1     &     &   $\surd$        &                             &                             &                             \\
S2     &     $\surd$    & &                             &                             &                             \\
S3     & $\surd$    & $\surd$   &                             &                             &                             \\
S4     & $\surd$    & $\surd$   & $\surd$                     &                             &                             \\
S5     & $\surd$    & $\surd$   &                             & $\surd$                     &                             \\
S6     & $\surd$    & $\surd$   &                             &                             & $\surd$                     \\ 
S7     & $\surd$    & $\surd$   &                          &     $\surd$                           & $\surd$                     \\     
\hline
\end{tabular}}
\end{table}

\section{Case Studies} \label{Section_5}
This section evaluates the screening speed and sufficiency of our VGS approach along with its enhancements, particularly in performance relative to LFGS. Especially, the NN models and KNN models are utilized to predict the UC cost cut and unit status respectively. We also open source our approach and details are given at \url{https://github.com/Hexuan085/UC_VGS}.

\subsection{Simulation Setups}
\begin{figure*}[b]  
\vspace{-1em}
	\centering
\centering
\includegraphics[scale=0.45]{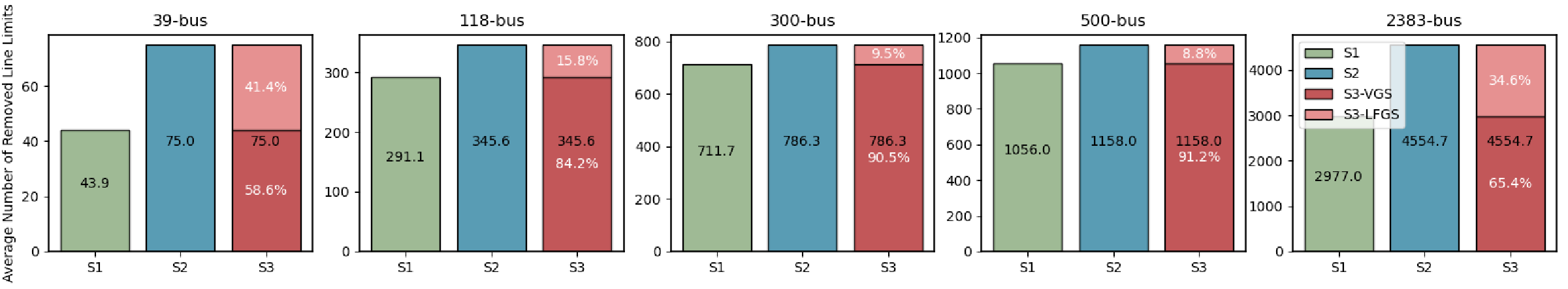}
 \vspace{-1em}
 \caption{The screening ability for redundant constraints of schemes S1-S3.} \label{GE:s1-s3_number}
\end{figure*}

\begin{figure*}[b] 
    \vspace{-1em}
	\centering
\centering
\includegraphics[scale=0.45]{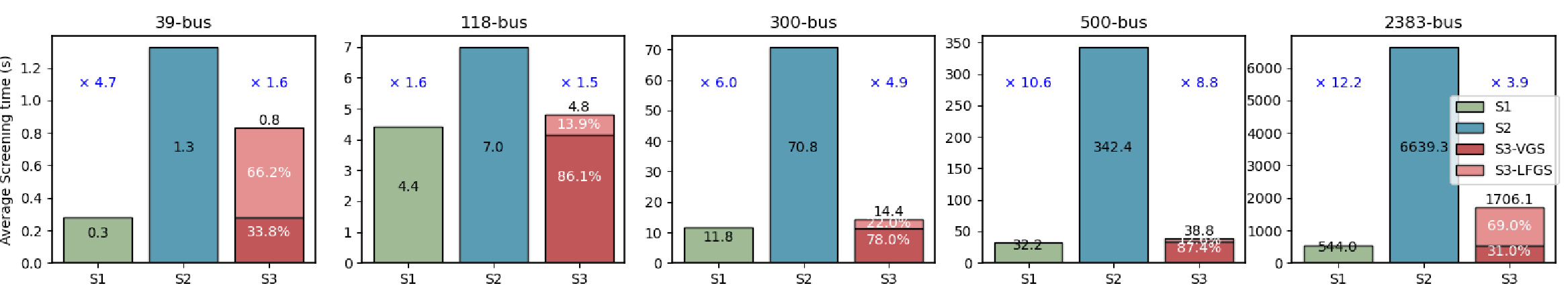}
 \vspace{-1em}
 \caption{The screening speed of schemes S1-S3. The proposed VGS achieves consistent speedup compared to the LFGS approach.}
	\label{GE:s1-s3_time}
\vspace{-1em}
\end{figure*}
Table \ref{VGS: Schemes} summarizes the screening schemes S1-S7 investigated here: \textbf{(1)} The schemes S1-S3 are applied to evaluate the screen time, the number of LPs solved, removed constraints, and time reduction in UC solutions of LFGS, VGS, and EOVL, respectively, verifying if LFGS removes more constraints, VGS screens faster, and EOVL combines their strengths. \textbf{(2)} The scheme S4 is simulated across different operating ranges to demonstrate Proposition \ref{operating_range}, verify screening applicability across varying load inputs, and examine the relationship between constraint-binding situations across ranges. \textbf{(3)} The schemes S5-S7 are implemented to evaluate and compare the improvements given by the cost cut, commitment cut, and their combination, demonstrating the Propositions \ref{VGS:cost} and \ref{VGS:unit}. To realize these evaluations and comparisons, we perform the case studies on the IEEE 39-, 118-, 300-, 500-, and 2383-bus power systems with the configurations given in Table \ref{GE: Configurations}. The load profiles are referred to as the corresponding cases in
MATPOWER \cite{zimmerman2010matpower}.

All simulations have been carried out on an unloaded MacBook Air with Apple M1 and 8G RAM. Specifically, all optimization problems are modeled and solved using YALMIP toolbox and MPT3 toolbox in MATLAB R2022b. 


\subsection{Performance Analysis for VGS, LFGS and EOVL}
\begin{table}[t]
\vspace{-1.5em}
\centering
\caption{Testing System Configurations} \label{GE: Configurations}
\setlength{\tabcolsep}{2.2mm}{
\begin{tabular}{cccccc}
\hline
System  & 39-bus & 118-bus & 300-bus & 500-bus & 2383-bus \\ \hline
No.Gen  & 10     & 54      & 69      & 90      & 200      \\
No.Line & 46     & 186     & 411     & 597     & 2896     \\ \hline
\end{tabular}}
\end{table}
The investigated LP-based screening has three basic backbones (VGS, LFGS, EOVL) as illustrated in Algorithm \ref{Algorithm: OACS}. To analyze the screening performance of these backbones, we test the schemes S1-S3 on five systems. From Fig. \ref{GE:s1-s3_number} and Fig. \ref{GE:s1-s3_time}, it can be seen that on the one hand, VGS removes fewer line limits for all systems, as the outer approximation in VGS causes a larger screening region than that of LFGS. On the other hand, VGS achieves the fastest screening speed, since it avoids solving the optimization model for each line limit. The larger the system scale is, the acceleration effect is more obvious, for the 2383-bus system, VGS can realize a 12.21x faster screening process compared to LFGS. 

To balance screening sufficiency and speed, we first apply VGS, followed by LFGS for the remaining line limits, which is the procedure of EOVL. Since the screening region of VGS encompasses that of LFGS, the constraints removed by VGS form a subset of those removed by LFGS method. Screening for the remaining constraints is then performed within LFGS screening region. Consequently, EOVL is expected to identify the same number of removed constraints as LFGS. The results confirm this, with VGS eliminating between 58.51\% and 91.22\% of the total constraints. Meanwhile, as mentioned in Section III-C, since the condition that $\Delta t_{V}\approx \Delta t_{LF}$ and $|N_{V}| > 2|G|$ holds, the screening times for all systems are sufficiently reduced, especially for the 500-bus system, EOVL can achieve an 8.81x acceleration. 
\begin{figure}[t] 
\vspace{-1.5em}
	\centering
	\centering
\includegraphics[scale=0.4]{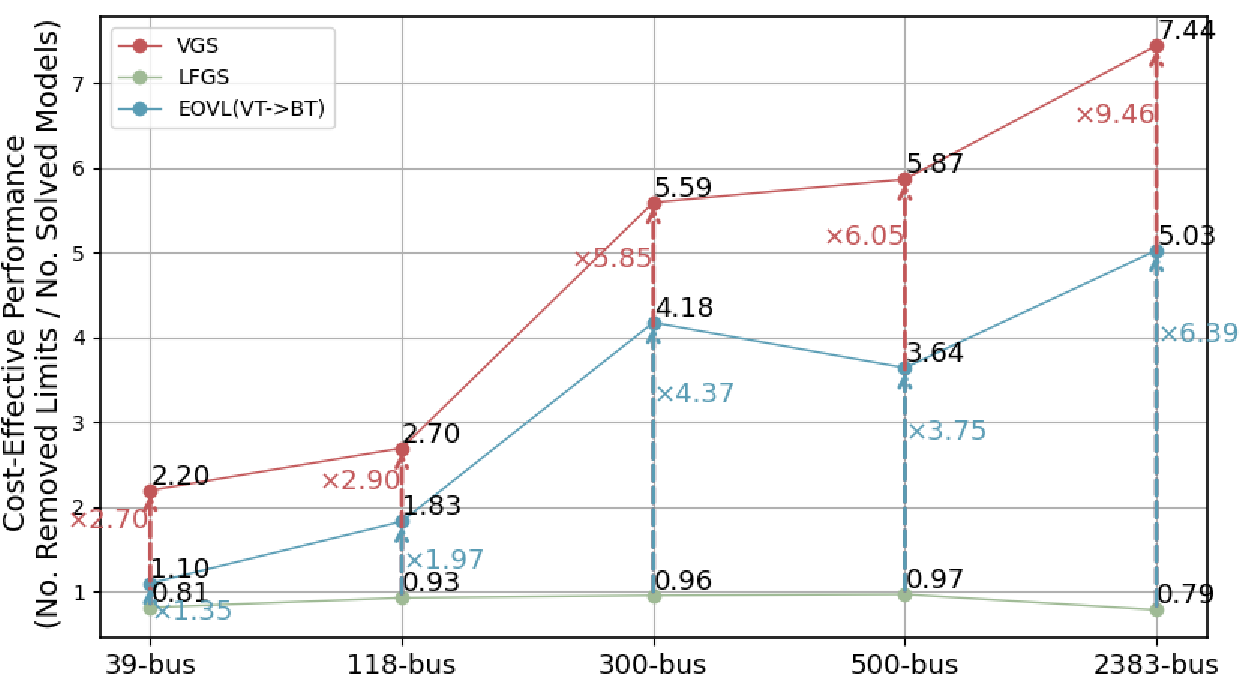}
	\caption{Comparison on computationally cost-effective performance. } \label{GE:cost-effective}
\end{figure}
\begin{figure*}[ht] \vspace{-1em}
	\centering
\centering
\includegraphics[scale=1.0]{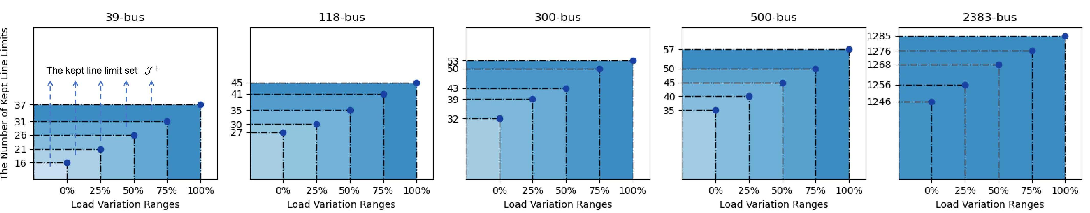}
 \caption{The screening results of S4 on different load operating ranges.}
	\label{GE:range}
\end{figure*}
The acceleration in VGS and EOVL demonstrates a higher computationally cost-effective performance, which can be defined as follows,
\begin{align} \label{screen_cost-effective}
r = \frac{N_{limits}}{N_{models}}.
\end{align}

In such a definition, $N_{limits}$ denotes the number of the removed constraints and $N_{models}$ is the number of LP models to solve. Fig. \ref{GE:cost-effective} shows that $r>1$ holds for both VGS and EOVL, achieving improvements of up to 9.46x compared to LFGS. This improvement is attributed to the fact that, in power systems, the number of units is generally much smaller than the number of transmission lines.

By removing redundant line limits from the UC models, the average solution time for them can be reduced from 12.44\% to 80.21\% without introducing a solution gap, as shown in Table \ref{GE: Time Reduction}. The difference in the results between S1 and S2/S3 demonstrates that UC solution time is closely linked to the number of removed limits and verifies the benefit of EOVL in speeding up both the UC and screening procedures.

\begin{figure*}[b] 
\vspace{-1em}
	\centering
\centering
\includegraphics[scale=0.5]{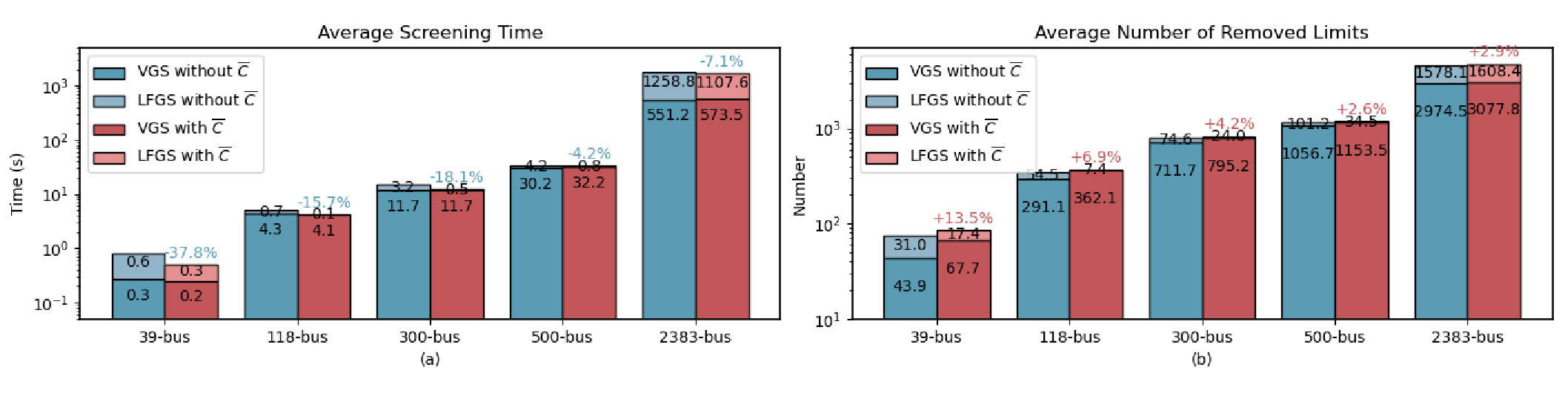}
 \caption{The screening performance of cost-cut integration (S5).}
	\label{GE:cost_bound_fig}
\end{figure*}
\begin{table}[t]
\centering
\caption{Comparison on Solution Time Reduction $\Delta T$} \label{GE: Time Reduction}
\setlength{\tabcolsep}{2.6mm}{
\begin{tabular}{cccccc}
\hline
 Systems     & 39      & 118     & 300     & 500     & 2383    \\ \hline
S1    & 2.73\%  & 8.43\%  & 22.53\% & 34.54\% & 31.82\% \\ 
S2/S3 & 12.44\% & 21.65\% & 36.36\% & 45.47\% & 80.21\% \\ \hline
\end{tabular}}
\vspace{-0.5em}
\end{table}
\begin{table}[t]
\vspace{-1.5em}
\centering
\caption{Solution Performance on UC instances with screened UC model} \label{GE: operating ranges}
\setlength{\tabcolsep}{1.5mm}{
\begin{tabular}{cccccc|c}
\hline
Systems                    & 39 & 118 & 300 & 500 & 2383 & Solu. Gap \\ \hline
$\Delta T$($\beta = 0.5$) & 2.03x & 2.13x   & 2.13x    & 2.06x  &  7.24x    &         0     \\
$\Delta T$($\beta = 1.0$) & 1.98x & 1.91x   & 2.15x    & 2.11x  &  6.89x    &         0     \\ \hline
\end{tabular}
}
\end{table}
\begin{table}[t]
\vspace{-5pt}
\centering
\caption{NN-based Cost Models Setting} \label{GE: NN_Configurations}
\setlength{\tabcolsep}{1.2mm}{
\begin{tabular}{cccccc}
\hline
Systems             & \multicolumn{3}{c}{Hidden layers} & Output layer & Activation units                           \\ \hline
    39/118/300/500/2383 & 100        & 50        & 30       &   1           & ReLU \\ \hline
\end{tabular}}
\vspace{-5pt}
\end{table}

\begin{table}[b]
\vspace{-2em}
\centering
\caption{Cost Bound Estimation Performance} \label{GE: NN_gap}
\setlength{\tabcolsep}{2mm}{
\begin{tabular}{ccccc}
\hline
Systems    & min($\frac{\hat{C}-C^{*}}{C^{*}}$)   & $\epsilon$     &   max($\frac{\overline{C}-C^{*}}{C^{*}}$)               & Solu. Gap              \\ \hline
39/118/300 & (-0.005, -0.001) & 0.005 & \multirow{2}{*}{$<1.5\%$} & \multirow{2}{*}{$0\%$} \\
500/2383   & (-0.01, -0.005) & 0.01  &                   &                          \\ \hline
\end{tabular}}
\vspace{-5pt}
\end{table}
\subsection{Improvements from Load Operating Range}
To verify Proposition \ref{operating_range} that VGS is valid for a load region, we use a polyhedral operating range defined in \eqref{region_definition} as $\mathcal{L}$ and integrate it into EOVL, which is the scheme S4.

\begin{align} \label{region_definition}
(1-\beta)\Tilde{\boldsymbol{l}} \leq \boldsymbol{l} \leq (1+\beta)\Tilde{\boldsymbol{l}};
\end{align}
here $\Tilde{\boldsymbol{l}}$ is the given nominal load profile and $\beta$ is the variation range. We test different settings of $\beta=20\%-100\%$ as shown in Fig. \ref{GE:range}. It shows across all systems that a larger value of $\beta$ keeps more line limits, while the set of kept line limits for smaller $\beta$ is a subset of that for larger $\beta$. This indicates we can remove more constraints with a more precise description of $\mathcal{L}$. In larger systems, the binding of constraints depends on more factors and is less influenced by the load profile. As a result, the percentage of kept constraints remains similar across different $\beta$ values. This suggests that a broader operating range can be incorporated to make the reduced model valid for more load inputs without overly complicating the reduced model.

In practice, the reduced models can be used to solve upcoming UC instances within the specified operating range. Here, we simulate cases of $\beta = 50\%$ and $100\%$. Table \ref{GE: operating ranges} presents the results on optimality and solution time, showing that the reduced model for a given $\mathcal{L}$ produces identical solutions for all tested instances, while the solution process is accelerated by an average factor of 1.91x to 7.24x.
\subsection{Improvements from Cost Bound}

\begin{figure*}[b] 
	\centering
\centering
\includegraphics[scale=0.5]{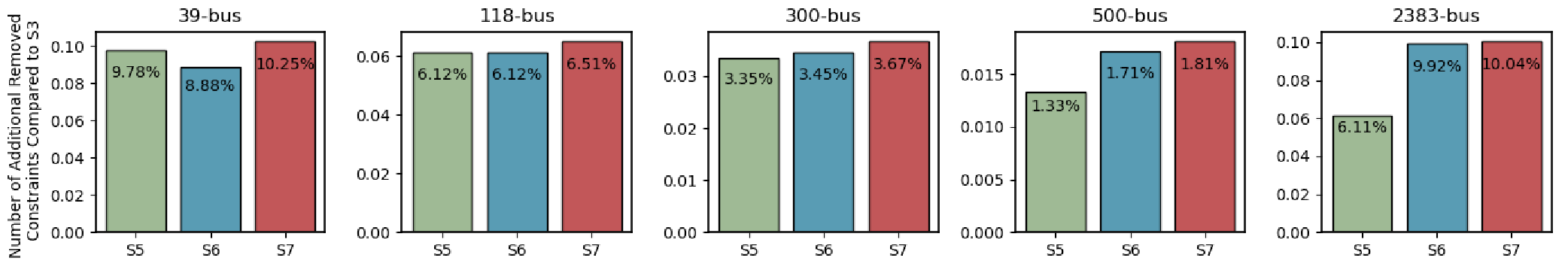}
 \caption{The screening ability of S5-S7.}
	\label{GE:prediction_fig_num}
\end{figure*}

\begin{figure*}[t] 
	\centering
\centering
\includegraphics[scale=0.5]{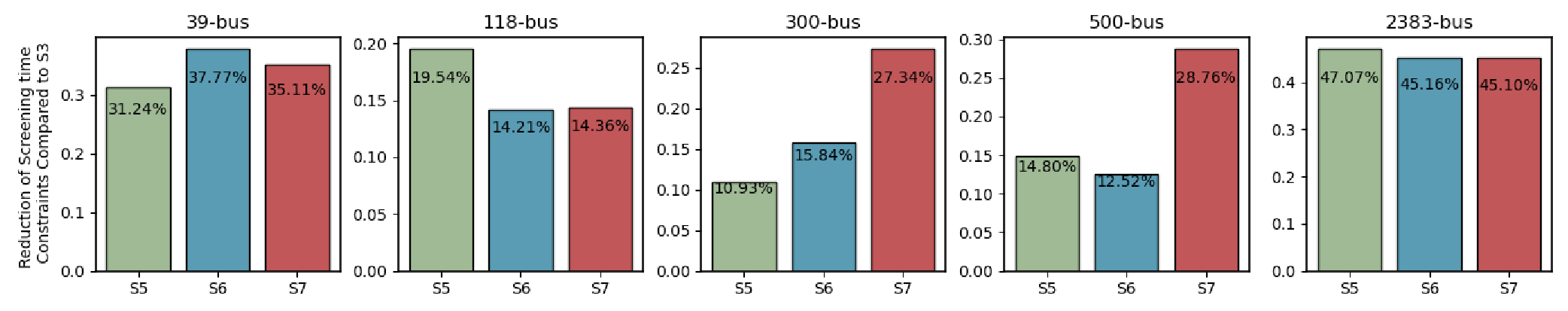}
 \caption{The screening time of S5-S7.}
	\label{GE:prediction_fig_time}
\end{figure*}

As mentioned in Section IV-C, we use the NN model and relaxation parameter $\epsilon$ to estimate the cost bound $\overline{C}=f_{NN}(\boldsymbol{\ell})(1+\epsilon)$. To train the NN model, we make use of the historical data of UC solutions along with the resulting total costs for supervised learning, with the loss function of minimizing the MSE error, i.e., 
\begin{align}  \label{Training: loss}
L: =\left\|f_{NN}(\boldsymbol{\ell}) - J(\boldsymbol{\ell})\right\|_{2}^{2};
\end{align}
where $J(\boldsymbol{\ell})$ denotes the real relationship between the load input and the output of UC cost. The neural networks are built on Tensorflow, with details explained in Table \ref{GE: NN_Configurations}. We collect 5,000 samples per neural network, with 20\% reserved for testing. To determine the relaxation parameter $\epsilon$, we look into the gap between the optimal UC costs $C^{*}$ and the predicted UC costs $\hat{C}$ on the 1,000 validation samples. As shown in Table \ref{GE: NN_gap}, for the 39-, 118- and 300-bus systems, $C^{*}$ is not less than 0.5\% of $\hat{C}$, while for 500-bus and 2383-bus systems, it is 1\%. Thus, we set $\epsilon = 0.005$ and $\epsilon = 0.01$ for the corresponding systems to ensure the feasibility of the cost cut \eqref{Screening: cost bound}. Next, we compare the gaps between the estimated bounds $\overline{C}$ and the optimal cost $C^{*}$, which are all less than $1.5\%$. This indicates $\overline{C}$ can provide sufficiently tight cuts to reduce the screening region while ensuring a 0\% solution gap after removing the additional constraints.

By adding the estimated cost cuts to the screening models, we obtain the screening results illustrated in Fig. \ref{GE:cost_bound_fig}. Compared to the cases without cost cuts, the screening times for all systems are reduced by 4.24\% to 37.81\%, while the total number of removed constraints increases by 2.60\% to 13.52\%, indicating that the computational cost-effectiveness of EOVL is further enhanced. Specifically, in the case of the 2383-bus system, both LFGS and VGS can remove more constraints with the cost cuts. In contrast, for the other systems, LFGS removes fewer constraints because a large portion of the constraints can be screened out in advance by VGS. As a result, the time reductions in the LFGS stage are more pronounced, since fewer constraints remain for it to check.

\subsection{Improvements from Commitment Cut}
Using a uniform distribution to obtain random loads within an operating range of $\beta=50\%$ and then solving \eqref{UC} for all generated loads to collect their UC solutions, we generate 5,000 load samples and record the on/off status i.e., the commitment in their UC solutions. Observing the generated samples, we first fix the status of units always on/off to 1/0 in the UC and screening models. Then, we apply the KNN method along with the generated dataset to predict the status of the remaining units for unseen 1,000 UC instances. We compare different settings of the KNN models and track their performance. The determined settings are presented in Table \ref{GE: KNN}, where we retain only the KNN predictors with 100\% validation accuracy and use their predictions for the corresponding units, leaving the on/off statuses of other units as decision variables. This helps reduce the screening complexity.

In this way, we implement the schemes S5-S7, and the results are given in Fig. \ref{GE:prediction_fig_num} and Fig. \ref{GE:prediction_fig_time}. It can be observed that for all systems, after fixing partial commitment, S6 can remove more constraints and screen faster. In most cases except the 39-bus system, the commitment cut can screen out more constraints than the cost cut. This suggests that, compared to the solutions satisfying the cost cut, those satisfying the commitment cut can generate more actual bounds—either line flow bounds or decision bounds—that are tighter than those obtained without any cuts. The results of S7 show that combining the cost cut with fixed unit statuses can further tighten the actual bounds and thus remove more constraints. Despite the increased complexity from additional constraints, screening time reductions exceed 10\% in all cases due to reduced variables and screened region, with the 2383-bus system achieving a reduction of up to 47.07\%. 

\begin{table}[tb]
\vspace{-1em}
\centering
\caption{Commitment Cut settings on predicted and fixed units.} \label{GE: KNN}
\setlength{\tabcolsep}{2.2mm}{\begin{tabular}{cccccc}
\hline
systems        & 39   & 118  & 300  & 500   & 2383 \\ \hline
K(No. Predicted units)              & 5(6) & 5(5) & 5(7) & 3(13) &  3(15)    \\
No.fixed units & 4    & 45   & 58   & 74    & 175  \\ \hline
\end{tabular}}
\vspace{-10pt}
\end{table}
\vspace{-1em}
\section{Conclusion} \label{Section_6}
In this paper, we present a novel vertex-guided perspective on linear programming (LP)-based screening for unit commitment. Vertex-guided screening (VGS) that relies on the actual bounds of decision variables rather than line flow is proposed to avoid solving the LP model for each line limit. Through the ensemble of VGS and line flow-guided screening (LFGS), we balance the screening speed and sufficiency. The fruitful improvements of LP-based screening involving load operating ranges, cost cuts, and commitment cuts are integrated into the proposed VGS. In all investigated cases with different system configurations, our approach requires significantly less time to achieve the same performance as the classic method. In future work, we will explore suitable cost cuts and commitment cuts under different settings of load operating range.
\bibliographystyle{IEEEtran}
\bibliography{GC_IEEE}

\end{document}